\newtheorem{remark}{Remark}
\newtheorem{theorem}{Theorem}
\newtheorem{lemma}{Lemma}
\newtheorem{corollary}{Corollary}
\newcommand{\biggg}{\bBigg@{3}}
\newcommand{\Biggg}{\bBigg@{3.5}}
\def\BibTeX{{\rm B\kern-.05em{\sc i\kern-.025em b}\kern-.08em
    T\kern-.1667em\lower.7ex\hbox{E}\kern-.125emX}}
\begin{document}
\title{Multiuser Beamforming for Pinching-Antenna Systems: An Element-wise Optimization Framework}
\author{
\author{
Mingjun Sun,~\IEEEmembership{Graduate Student Member,~IEEE,} Chongjun Ouyang,~\IEEEmembership{Member,~IEEE,}\\ Shaochuan Wu,~\IEEEmembership{Senior Member,~IEEE,} and Yuanwei Liu,~\IEEEmembership{Fellow,~IEEE}
        % <-this % stops a space
%\thanks{This work was supported by ZTE Industry-University-Institute Cooperation Funds. \textit{(Corresponding author: Shaochuan Wu.)}}% <-this % stops a space
\thanks{Mingjun Sun and Shaochuan Wu are with the School of Electronics and Information Engineering, Harbin Institute of Technology, Harbin 150001, China (e-mail: sunmj@stu.hit.edu.cn; scwu@hit.edu.cn).}
\thanks{Chongjun Ouyang is with the School of Electronic Engineering and Computer Science, Queen Mary University of London, London, E1 4NS, U.K. (e-mail: c.ouyang@qmul.ac.uk).}
%\thanks{Zhaolin Wang is with the School of Electronic Engineering and Computer Science, Queen Mary University of London, London E1 4NS, U.K. (e-mail: zhaolin.wang@qmul.ac.uk).}
\thanks{Yuanwei Liu is with the Department of Electrical and Electronic Engineering, The University of Hong Kong, Hong Kong (e-mail: yuanwei@hku.hk).}
}}
\maketitle
\begin{abstract}
The pinching-antenna system (PASS) reconstructs wireless channels through pinching beamforming, i.e., optimizing the activated locations of pinching antennas (PAs) along the waveguide. The aim of this article is to investigate the joint design of baseband beamforming and pinching beamforming. A low-complexity element-wise sequential optimization framework is proposed to address the sum-rate maximization problem in PASS-enabled downlink and uplink channels. i) For the downlink scenario, maximum ratio transmission (MRT), zero-forcing (ZF), and minimum mean square error (MMSE) beamforming schemes are employed as baseband beamformers. For each beamformer, a closed-form expression for the downlink sum-rate is derived as a single-variable function with respect to the pinching beamformer. Based on this, a sequential optimization method is proposed, where the positions of the PAs are updated element-wise using a low-complexity one-dimensional search. ii) For the uplink scenario, signal detection is performed using maximum ratio combining (MRC), ZF, and MMSE combiners. A closed-form sum-rate expression is derived for each linear combiner, and a similar element-wise design is applied to optimize the pinching beamforming.
Numerical results are provided to validate the effectiveness of the proposed method and demonstrate that: (i) For all considered linear beamformers, the proposed PASS architecture outperforms conventional fixed-antenna systems in terms of sum-rate performance; (ii) in both downlink and uplink channels, ZF achieves performance close to that of MMSE and significantly outperforms MRT or MRC; and (iii) the proposed element-wise design eliminates the need for alternating updates between the baseband and pinching beamformers, thereby ensuring low computational complexity.

\end{abstract} 
\begin{IEEEkeywords}
Element-wise optimization, linear beamforming, pinching-antenna systems.
\end{IEEEkeywords}
\section{Introduction}
As a cornerstone of wireless communications for decades, multiple-input multiple-output (MIMO) technology continues to maintain its pivotal status in modern systems\cite{heath2018foundations, MIMO1}. In recent years, MIMO has been evolving rapidly to address the explosive growth in demand expected in future 6G networks. As part of this evolution, a new class of flexible-antenna systems has emerged, including reconfigurable intelligent surfaces (RIS)\cite{RIS, Wu2020}, fluid antennas\cite{wong2020fluid}, and movable antennas\cite{zhu2023movable}. The core idea of them lies in introducing new degrees of freedom (DoFs) to enable reconfigurable or even customized wireless channels. This allows for tunable effective channel gains tailored to served users, thereby enhancing system throughput. Such technologies have garnered significant research interest in both academia and industry.

Nevertheless, these technologies are more effective in mitigating small-scale fading, while their capability to combat large-scale path loss remains limited due to their small spatial tuning range of antenna elements. As a result, they are generally unable to address line-of-sight (LoS) path blockage issues. Although RISs can enable signal diffraction around obstacles by programming the phase shifts of its metasurface elements, this comes at the cost of double attenuation\cite{double_attenuation}. Moreover, these architectures often lack hardware flexibility, as the antenna arrays are typically fixed once fabricated, making it difficult to scale the number of antennas dynamically. To address these limitations, a novel flexible-antenna architecture known as the pinching-antenna system (PASS) has been proposed\cite{liu2025pinchingantenna}.

The first PASS prototype, developed by DOCOMO\cite{suzuki2022pinching, pinching_antenna1}, comprises elongated dielectric waveguides and small dielectric particles, referred to as pinching antennas (PAs). The dielectric waveguides enable near-lossless signal transmission, while the PAs induce electromagnetic wave emission and reception. Owing to the adjustable positions of PAs, PASS enables a hybrid beamforming architecture. The transmit signal is first processed by a baseband beamformer and then fed into the waveguides, where the PAs' positions are further adjusted to tune the signal phase and control large-scale path loss, which is referred to as \emph{pinching beamforming}.
This physical implementation of PASS brings the following advantages: 1) \emph{Strong LoS link}: Compared to the aforementioned flexible antenna schemes, PASS can extend the dielectric waveguide arbitrarily. This allows the deployment of PAs closer to users to form strong LoS links. As a result, large-scale path loss is reduced, which significantly improves system performance. 2) \emph{Scalable and low-cost deployment}: PASS enables flexible control of the formation or termination of communication regions by adding or removing PAs. This operation is simple and low-cost in terms of hardware, which is also not achievable by existing flexible-antenna systems.

In essence, PASS can be viewed as a practical realization of the fluid-antenna and movable-antenna concepts proposed in prior works \cite{fluidante}, \cite{movableante}, while offering enhanced flexibility and scalability. In recognition of DOCOMO's foundational contributions, we refer to this technology as PASS throughout this article. Moreover, PASS aligns with the emerging vision of surface-wave communication superhighways \cite{Wong}, which envisions leveraging in-waveguide propagation through reconfigurable waveguides to reduce path loss and improve signal power delivery \cite{LiuHaizhe}, \cite{ChuZhiyuan}.

\subsection{Prior Works}
Driven by the benefits of PASS, several recent studies have focused on exploring its potential, including performance analysis\cite{ding2024flexible, tianwei2025ULperformance, tyrovolas2025performanceanalysis, ouyang2025array} and beamforming design\cite{tegos2024minimum, YanqingMaximization, NOMA, bereyhi2025downlink, bereyhi2025mimopass,wang2025modeling}. 
For performance analysis, the authors of \cite{ding2024flexible} investigated various downlink PASS configurations and demonstrated the significant performance gains of PASS over conventional fixed-antenna systems using stochastic geometry. The work in \cite{tianwei2025ULperformance} further derived the ergodic uplink rate for PASS. In \cite{tyrovolas2025performanceanalysis}, the authors examined key performance metrics, such as outage probability and data rate, under lossy waveguide conditions, revealing their critical impact, especially in the case of long waveguides. In addition, the optimal number of antennas and inter-antenna spacing in PASS was studied in \cite{ouyang2025array}.

The hybrid beamforming architecture necessitates the joint optimization of baseband and pinching beamforming, which poses significant challenges. To address the uplink user fairness problem, \cite{tegos2024minimum} decomposed the joint optimization into two subproblems: the PAs' locations were optimized via successive convex approximation (SCA), and a closed-form solution for power allocation was subsequently derived. For downlink rate maximization, a two-stage algorithm was proposed in \cite{YanqingMaximization} for a single-waveguide, single-user scenario, where the PAs' positions were first coarsely optimized to minimize path loss, then refined to enhance signal strength. In \cite{NOMA}, a nonorthogonal multiple access (NOMA)-assisted multiuser communication system was considered, where the authors proposed a matching-theory-based single-waveguide antenna activation method to maximize the system throughput.
\cite{bereyhi2025downlink} considered a multi-waveguide, multi-user scenario where each waveguide activated a single PA. The authors proposed a block coordinate descent (BCD) algorithm based on fractional programming (FP) and Gauss-Seidel methods, which alternately optimized baseband beamforming and PAs' positions to maximize the weighted sum-rate. This work was extended in \cite{bereyhi2025mimopass} to a more general case where each waveguide activated multiple PAs. 
% A similar problem was also studied in \cite{xu2025jointtransmitpinchingbeamforming}, where the authors proposed an optimization algorithm based on majorization minimization and penalty dual decomposition (MM-PDD). To further reduce computational complexity, a transformer-based dual learning algorithm was also developed. 
In addition, the downlink transmit power minimization problem was addressed in \cite{wang2025modeling}, where a penalty-based alternating optimization algorithm was proposed to minimize transmit power.
The above studies have collectively demonstrated the effectiveness of PASS in enhancing the spectral efficiency and its superiority over traditional fixed-antenna systems. 
% Beyond these, the physical-layer security of PASS is also explored, including secure transmission \cite{sun2025physicallayersecurity} and covert communication \cite{jiang2025covert}. Both studies demonstrate the significant advantages of PASS in enhancing wireless security.
% The aforementioned studies, whether focused on performance analysis or beamforming, all assume perfect channel state information (CSI). However, obtaining accurate CSI remains a major challenge. The beam training problem in PASS is investigated in \cite{lv2025beamtraining}, where the authors propose a codebook-based training algorithm that significantly reduces the training overhead compared to conventional exhaustive search techniques. In \cite{xiao2025channelestimation}, two learning-based channel estimation methods are introduced: one achieves high estimation accuracy through expert knowledge, while the other adapts to a dynamic number of PAs by treating a single PA as a bias. In \cite{zhou2025channel}, the authors design a channel estimation method for mmWave PASSs, which achieves high estimation accuracy while reducing pilot overhead.

\subsection{Motivations and Contributions}
For sum-rate maximization problem in PASS, the coupling between baseband and pinching beamforming typically necessitates alternating optimization, as adopted in existing methods \cite{bereyhi2025downlink, bereyhi2025mimopass, wang2025modeling}, all involving nested inner and outer iterations. This incurs significant computational overhead, particularly in large-user scenarios with high-dimensional baseband beamforming caused by the large number of radio-frequency (RF) chains.
Moreover, the joint optimization algorithms in \cite{bereyhi2025mimopass, wang2025modeling} are sensitive to the initialization of the PAs' positions, and their performance is generally comparable to that of the low-complexity zero-forcing (ZF)-based scheme. 
Motivated by this observation, we adopt classical linear precoding schemes as baseband beamformers and derive a closed-form, structurally simplified update rule for PAs' positions under each baseband beamforming strategy.
Unlike existing methods, the proposed approach eliminates the need for iterative alternating optimization, thereby significantly reducing computational complexity.

%This design substantially reduces the reliance on high-complexity operations, such as matrix inversion and determinant, thereby improving computational efficiency. design an efficient element-wise optimization framework to realize pinching beamforming.By introducing classical linear beamforming strategies, we transform the original joint optimization problems into single-variable optimization problems with respect to PAs' positions, thereby reducing the computational burden. The key idea is to incorporate classical linear beamforming strategies to transform the original joint optimization problem into a single-variable problem with respect to the PAs' positions. During the PAs' positions update, the baseband beamformer is simultaneously aligned with the channel in real time, thereby significantly reducing the overall computational burden.

The main contributions of this paper are summarized as follows:
\begin{itemize}
    \item To evaluate the performance gain of PASS over conventional fixed-antenna systems, we formulate sum-rate maximization problems for both downlink and uplink transmissions under a practical model that incorporates in-waveguide loss. To solve these problems, we propose an element-wise sequential optimization framework that enables joint optimization of baseband beamforming and pinching beamforming.
	\item For downlink transmission, we consider three baseband beamforming schemes: maximum ratio transmission (MRT), ZF, and minimum mean square error (MMSE). We prove by contradiction that the inequality power constraint is equivalent to an equality constraint, which allows power normalization of the baseband beamformers. Then, we derive simplified closed-form expressions for the objective functions with respect to the PAs' positions. Finally, a low-complexity one-dimensional search algorithm is applied to sequentially optimize the PAs' positions, maximizing system throughput with significantly reduced complexity.
	\item For uplink signal detection, three pinching beamforming schemes based on maximum ratio combining (MRC), ZF, and MMSE are investigated. Similar to the downlink case, we apply the matrix inversion lemma to derive a simplified PAs' positions update problem. This design significantly reduces reliance on high-complexity operations such as matrix inversion and determinant calculation, thereby improving computational efficiency. During each sequential update of the PAs' positions, the baseband beamformer is simultaneously adjusted to remain aligned with the channel.

	\item We present numerical results to validate the effectiveness of the proposed method and demonstrate the significant potential of PASS compared to conventional fixed-antenna systems.
Specifically:
1) By avoiding the ``inner-outer double-loop'' iteration structure, the proposed method yields a low computational overhead.
2) Across all considered system configurations, PASS consistently outperforms conventional fixed-antenna systems.
3) Compared to conventional fixed-antenna systems, the signal-to-noise ratio (SNR) reference thresholds in PASS that separate the high- and low-SNR performance regions for MRT (MRC) and ZF are lower, and the performance gap between ZF and MMSE becomes negligible.
\end{itemize}

The remainder of this paper is organized as follows. Section II formulates the sum-rate maximization problems for both downlink and uplink transmissions. The proposed pinching beamforming methods for downlink and uplink are derived in Sections III and IV, respectively. Numerical results are provided in Section V. Finally, conclusions are drawn in Section VI.

\textit{Notation:} 
Unless otherwise specified, $\mathbb{R}$ and $\mathbb{C}$ denote the real and complex number sets, respectively. The $N \times N$ identity matrix is denoted by $\mathbf{I}_N$, and $[N]$ represents the set $\{1, \ldots, N\}$. Vectors and matrices are represented by bold lower-case and upper-case letters, respectively. For any matrix $\mathbf{A}$, $\mathbf{A}^{\mathrm{T}}$, $\mathbf{A}^{*}$, and $\mathbf{A}^{\mathrm{H}}$ denote its transpose, conjugate, and conjugate-transpose, respectively.

\section{System Model}\label{Section: System Model}
Consider a PASS, where a base station (BS) serves $K$ single-antenna users using $M$ pinched waveguides, as shown in {\figurename} {\ref{Figure: System_Model}}. The users are distributed within a predefined two-dimensional rectangular area of size $D_x\times D_y$. We denote the location of user $k\in[K]$ by ${\mathbf{u}}_k=[x_k, y_k, 0]^{\mathsf{T}}$, assuming that all users are situated on the $xy$-plane.

\subsection{Signal Model}
Referring to {\figurename} {\ref{Figure: System_Model}}, the $M$ waveguides are deployed along the $x$-axis at a height $a$ and are arranged in an array along the $y$-axis, with adjacent waveguides spaced by $d$. We assume that each waveguide is equipped with $N$ PAs. The location of the $n$th PA on the $m$th waveguide (referred to as the $(m,n)$th PA) is denoted by ${\mathbf{p}}_{m,n}=[p_{m,n},(m-1)d,a]^{\mathsf{T}}$, where $p_{m,n}$ represents the position of the $(m,n)$th PA along the $x$-axis. Additionally, the distance between any two adjacent pre-configurable positions, $\lvert p_{m,n}-p_{m,n'}\rvert$ ($\forall n\ne n'$), should be greater than or equal to half the wavelength of the carrier frequency to avoid mutual coupling effects.

\begin{figure}[!t]
 \centering
\setlength{\abovecaptionskip}{0pt}
\includegraphics[height=0.2\textwidth]{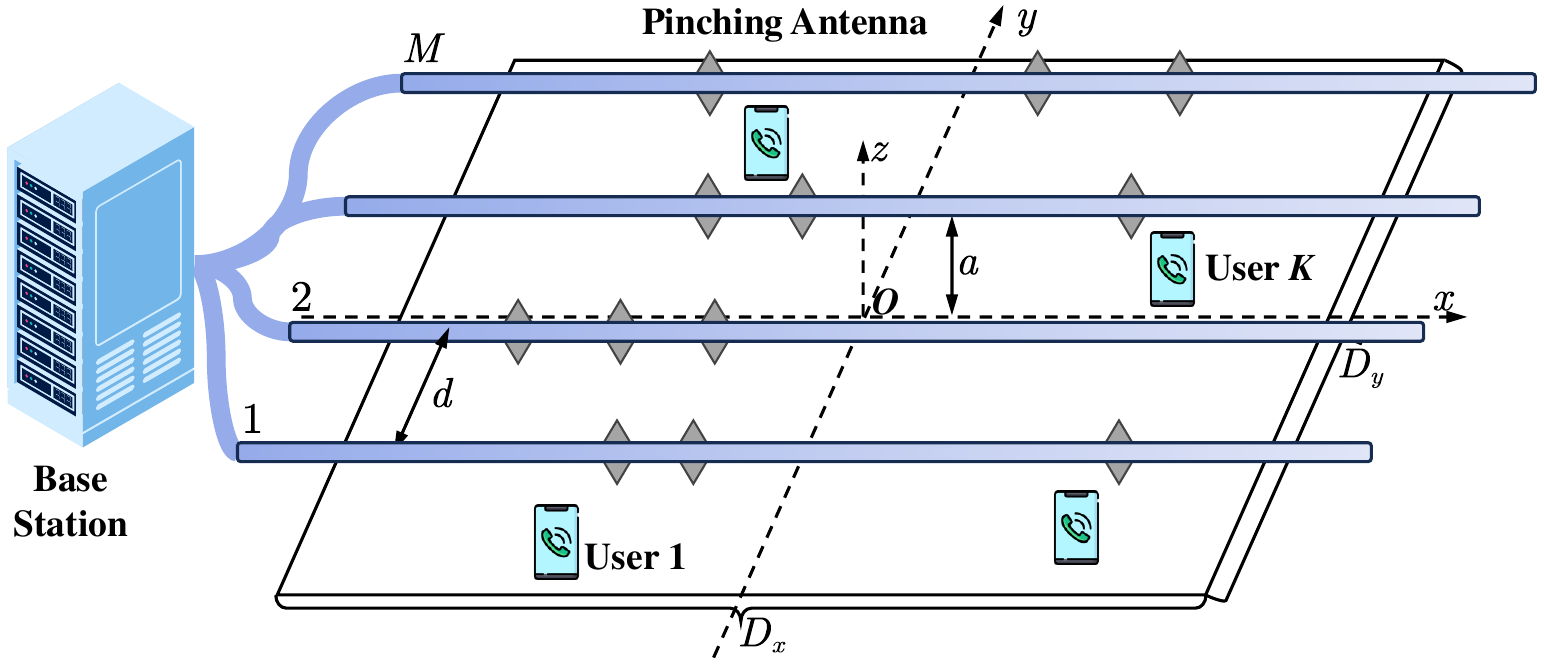}
\caption{Illustration of a PASS where the BS is equipped by multiple pinched waveguides.}
\label{Figure: System_Model}
\vspace{-15pt}
\end{figure}

According to \cite{pinching_antenna1,suzuki2022pinching}, a dielectric waveguide can transmit and receive radio waves or signals through its pinched dielectric particles. For simplicity, we model each PA on the waveguide as an isotropic antenna that feeds the signal into or out of the waveguide. 
\subsubsection{Downlink Model}
For downlink transmission, the signal received by user $k$ from all pinching antennas can be expressed as follows:
\begin{align}\label{DL_Signal_Model_Initial}
y_k=\sum_{m=1}^{M}\sum_{n=1}^{N}h_{k,m,n}x_{m,n}+z_k,
\end{align}
where $x_{m,n}$ represents the signal radiated from the $(m,n)$th PA, $h_{k,m,n}$ denotes the channel coefficient between user $k$ and the $(m,n)$th PA, $z_k\sim{\mathcal{CN}}(0,\sigma_k^2)$ is the additive white Gaussian noise (AWGN) at user $k$, with $\sigma_k^2$ representing the noise power. Taking into account the wave propagation from the BS to the PA along the waveguide, we have
\begin{align}\label{Inner_Waveguide_Signal}
x_{m,n}=\eta_{m,n}^{\frac{1}{2}}{\rm{e}}^{-{\rm{j}}k_{\rm{g}}(p_{m,n}-o_m)}x_{m},
\end{align}
where $k_{\rm{g}}=\frac{2\pi}{\lambda_{\rm{g}}}$ is the wavenumber in waveguides with $\lambda_{\rm{g}}$ representing the waveguide wavelength in a dielectric medium, $o_m$ is the position on the $m$th waveguide where the signal is fed from the BS, and $\eta_{m,n}$ accounts for the power loss along the dielectric waveguide. Based on \cite{NOMA}, we model $\eta_{m,n}$ as follows:
\begin{align}
\eta_{m,n}=\frac{10^{-\frac{\kappa\left\lvert p_{m,n}-o_m\right\rvert}{10}}}{N},
\end{align}
where $\kappa\geq0$ is the average attenuation factor along the dielectric waveguide in dB/m \cite{yeh2008essence}. We note that $\kappa=0$ represents the special case of a lossless dielectric and a perfectly conducting surface. 

Substituting \eqref{Inner_Waveguide_Signal} into \eqref{DL_Signal_Model_Initial} gives 
\begin{align}\label{DL_Signal_Model_Second}
y_k=\sum_{m=1}^{M}\sum_{n=1}^{N}h_{k,m,n}\eta_{m,n}^{\frac{1}{2}}{\rm{e}}^{-{\rm{j}}k_{\rm{g}}(p_{m,n}-o_m)}x_{m}+z_k.
\end{align}
For compactness, we rewrite \eqref{DL_Signal_Model_Second} as follows:
\begin{align}\label{DL_Signal_Model}
y_k={\mathbf{h}}_k^{\mathsf{H}}({\mathbf{P}}){\mathbf{G}}_{\rm{dl}}({\mathbf{P}}){\mathbf{x}}+z_k.
\end{align}
The terms appearing in \eqref{DL_Signal_Model} are defined as follows:
\begin{itemize}
  \item ${\mathbf{P}}=[{\mathbf{p}}_{1},\ldots,{\mathbf{p}}_{M}]\in{\mathbbmss{R}}^{N\times M}$ consists of the $x$-axis positions of all PAs, where ${\mathbf{p}}_{m}=[p_{m,1},\ldots,p_{m,N}]^{\mathsf{T}}\in{\mathbbmss{R}}^{N\times 1}$ stores the positions of the PAs along the $m$th waveguide for $m\in[M]$.
  \item ${\mathbf{h}}_k({\mathbf{P}})=[\bar{\mathbf{h}}_k^{\mathsf{T}}({\mathbf{p}}_{1}),\ldots,\bar{\mathbf{h}}_k^{\mathsf{T}}({\mathbf{p}}_{M})]^{\mathsf{H}}\in{\mathbbmss{C}}^{MN\times1}$ denotes the channel vector for user $k$, where $\bar{\mathbf{h}}_k({\mathbf{p}}_{m})=[h_{k,m,1},\ldots,h_{k,m,N}]^{\mathsf{T}}\in{\mathbbmss{C}}^{N\times 1}$ denotes the channel vector from user $k$ to the $m$th waveguide for $m\in[M]$.
  \item ${\mathbf{G}}_{\rm{dl}}({\mathbf{P}})={\rm{BlkDiag}}({\mathbf{g}}_{\rm{dl}}({\mathbf{p}}_{1});\ldots;{\mathbf{g}}_{\rm{dl}}({\mathbf{p}}_{M}))\in{\mathbbmss{C}}^{MN\times M}$ accounts for wave propagation effects within the waveguides, where ${\mathbf{g}}_{\rm{dl}}({\mathbf{p}}_{m})=[\eta_{m,1}^{\frac{1}{2}}{\rm{e}}^{-{\rm{j}}k_{\rm{g}}(p_{m,1}-o_m)},\ldots
      ,\eta_{m,N}^{\frac{1}{2}}{\rm{e}}^{-{\rm{j}}k_{\rm{g}}(p_{m,N}-o_m)}]^{\mathsf{T}}\in{\mathbbmss{C}}^{N\times 1}$ for $m\in[M]$. 
  \item ${\mathbf{x}}=[x_1,\ldots,x_M]^{\mathsf{T}}={\mathbf{W}}{\mathbf{s}}=\sum_{k=1}^{K}{\mathbf{w}}_ks_k\in{\mathbbmss{C}}^{M\times1}$ denotes the signal after transmit beamforming, where ${\mathbf{W}}=[{\mathbf{w}}_1,\ldots,{\mathbf{w}}_K]\in{\mathbbmss{C}}^{M\times K}$ denotes the transmit beamforming matrix with ${\mathbf{w}}_k\in{\mathbbmss{C}}^{M\times1}$ being the beamforming vector for user $k$, and ${\mathbf{s}}=[s_1,\ldots,s_K]^{\mathsf{T}}\in{\mathbbmss{C}}^{K\times 1}$ denotes the coded symbol vector with ${\mathbf{s}}_k$ being the data symbol for user $k$ and ${\mathbf{s}}\sim{\mathcal{CN}}({\mathbf{0}},{\mathbf{I}}_K)$.
\end{itemize}
It follows that
\begin{align}
y_k=\underbrace{{\mathbf{h}}_k^{\mathsf{H}}({\mathbf{P}}){\mathbf{G}}_{\rm{dl}}({\mathbf{P}}){\mathbf{w}}_ks_k}_{\text{desired~signal}}
+\underbrace{{\mathbf{h}}_k^{\mathsf{H}}({\mathbf{P}}){\mathbf{G}}_{\rm{dl}}({\mathbf{P}})\sum_{k'\ne k}{\mathbf{w}}_{k'}s_{k'}}_{\text{inter-user~interference}}+z_k.
\end{align}
The received signal-to-interference-plus-noise ratio (SINR) at user $k$ to decode the data information is thus given as follows:
\begin{align}\label{SINR_dl}
\gamma_{{\rm{dl}},k}=\frac{\lvert{\mathbf{h}}_k^{\mathsf{H}}({\mathbf{P}}){\mathbf{G}}_{\rm{dl}}({\mathbf{P}}){\mathbf{w}}_k\rvert^2}
{\sum_{k'\ne k}\lvert{\mathbf{h}}_k^{\mathsf{H}}({\mathbf{P}}){\mathbf{G}}_{\rm{dl}}({\mathbf{P}}){\mathbf{w}}_{k'}\rvert^2+\sigma_k^2}.
\end{align}
\subsubsection{Uplink Model}
For uplink transmission operating in the time-domain duplexing (TDD) mode, the signal received at the $(m,n)$th PA can be expressed as follows:
\begin{align}
\hat{y}_{m,n}=\sum_{k=1}^{K}h_{k,m,n}x_k+z_{m,n},
\end{align}
where $x_k$ is the signal transmitted by user $k$, and $z_{m,n}\sim{\mathcal{CN}}(0,\sigma^2)$ is the AWGN at the PA, with $\sigma^2$ representing the noise power. Given the described system, we express the transmitted signal by user $k$ as $x_k=\sqrt{P_k}s_k$, where $P_k$ is the transmit power, and $s_k$ is the normalized encoded data symbol satisfying ${\mathbbmss{E}}\{\lvert s_k\rvert^2\}=1$ for $k\in[K]$. Let ${\mathbf{s}}=[s_1,\ldots,s_K]^{\mathsf{T}}$ denote the vector of all users' data symbols, which are assumed to follow a zero-mean unit-variance complex Gaussian distribution, i.e., ${\mathbf{s}}\sim{\mathcal{CN}}({\mathbf{0}},{\mathbf{I}}_K)$.

Taking into account the wave propagation from the PA to the BS along the waveguide, the signal received at the BS from the $(m,n)$th PA is given by
\begin{align}\label{Received_Signal_Each_PA_to_BS}
y_{m,n}=\zeta_{m,n}^{\frac{1}{2}}{\rm{e}}^{-{\rm{j}}k_{\rm{g}}(p_{m,n}-o_m)}\hat{y}_{m,n},
\end{align}
where $\zeta_{m,n}=10^{-\frac{\kappa\left\lvert p_{m,n}-o_m\right\rvert}{10}}$ accounts for the power loss along the dielectric waveguide. 

Building on \eqref{Received_Signal_Each_PA_to_BS}, we write the total received signal vector at the BS as follows:
\begin{align}\label{Received_Signal_at_BS}
{\mathbf{y}}=\left[\begin{smallmatrix}\sum_{n=1}^{N}y_{1,n}\\\vdots\\\sum_{n=1}^{N}y_{M,n}\end{smallmatrix}\right]
={\mathbf{G}}_{\rm{ul}}^{\mathsf{T}}({\mathbf{P}})\sum_{k=1}^{K}{\mathbf{h}}_k^{*}({\mathbf{P}})\sqrt{P_k}s_k+{\mathbf{G}}_{\rm{ul}}^{\mathsf{T}}({\mathbf{P}}){\mathbf{z}}.
\end{align}
The terms appearing in \eqref{Received_Signal_at_BS} are defined as follows:
\begin{itemize}
  \item ${\mathbf{z}}=[[z_{1,1},\ldots,z_{1,N}],\ldots,[z_{M,1},\ldots,z_{M,N}]]^{\mathsf{T}}\in{\mathbbmss{C}}^{MN\times1}$ is the AWGN vector satisfying ${\mathbf{z}}\sim{\mathcal{CN}}({\mathbf{0}},\sigma^2{\mathbf{I}}_{MN})$.
  \item ${\mathbf{G}}_{\rm{ul}}({\mathbf{P}})={\rm{BlkDiag}}({\mathbf{g}}_{\rm{ul}}({\mathbf{p}}_{1});\ldots;{\mathbf{g}}_{\rm{ul}}({\mathbf{p}}_{M}))\in{\mathbbmss{C}}^{MN\times M}$ accounts for wave propagation effects within the waveguides, where ${\mathbf{g}}_{\rm{ul}}({\mathbf{p}}_{m})=[\zeta_{m,1}^{\frac{1}{2}}{\rm{e}}^{-{\rm{j}}k_{\rm{g}}(p_{m,1}-o_m)},\ldots
      ,\zeta_{m,N}^{\frac{1}{2}}{\rm{e}}^{-{\rm{j}}k_{\rm{g}}(p_{m,N}-o_m)}]^{\mathsf{T}}\in{\mathbbmss{C}}^{N\times 1}$ for $m\in[M]$.
\end{itemize}
It follows that ${\mathbf{G}}_{\rm{ul}}^{\mathsf{T}}({\mathbf{P}}){\mathbf{z}}$ is Gaussian distributed with its mean and covariance determined as follows:
\begin{subequations}
\begin{align}
&{\mathbbmss{E}}\{{\mathbf{G}}_{\rm{ul}}^{\mathsf{T}}({\mathbf{P}}){\mathbf{z}}\}={\mathbf{G}}_{\rm{ul}}^{\mathsf{T}}({\mathbf{P}}){\mathbbmss{E}}\{{\mathbf{z}}\}={\mathbf{0}},\\
&{\mathbbmss{E}}\{{\mathbf{G}}_{\rm{ul}}^{\mathsf{T}}({\mathbf{P}}){\mathbf{z}}({\mathbf{G}}_{\rm{ul}}^{\mathsf{T}}({\mathbf{P}}){\mathbf{z}})^{\mathsf{H}}\}\\\nonumber
&={\rm{Diag}}([\sigma^2\lVert{\mathbf{g}}_{\rm{ul}}({\mathbf{p}}_{1})\rVert^2,\ldots,\sigma^2\lVert{\mathbf{g}}_{\rm{ul}}({\mathbf{p}}_{M})\rVert^2])\triangleq{\mathbf{R}}_{\mathbf{z}}({\mathbf{P}}),
\end{align}
\end{subequations}
where $\lVert{\mathbf{g}}_{\rm{ul}}({\mathbf{p}}_{m})\rVert^2=\sum_{n=1}^{N}\zeta_{m,n}$ for $m\in[M]$. Therefore, we obtain ${\mathbf{G}}_{\rm{ul}}^{\mathsf{T}}({\mathbf{P}}){\mathbf{z}}\sim{\mathcal{CN}}({\mathbf{0}},{\mathbf{R}}_{\mathbf{z}}({\mathbf{P}}))$. Let $\mathbf{v}_k\in{\mathbbmss{C}}^{M\times1}$ denote the linear detector used at the BS to recover the data information contained in $s_k$, which yields
\begin{align}
{\mathbf{v}}_k^{\mathsf{H}}{\mathbf{y}}&=\underbrace{{\mathbf{v}}_k^{\mathsf{H}}{\mathbf{G}}_{\rm{ul}}^{\mathsf{T}}({\mathbf{P}}){\mathbf{h}}_k^{*}({\mathbf{P}})
\sqrt{P_k}s_k}_{\text{desired~signal}}\nonumber \\ 
&+\underbrace{{\mathbf{v}}_k^{\mathsf{H}}{\mathbf{G}}_{\rm{ul}}^{\mathsf{T}}({\mathbf{P}})\sum_{k'\ne k}{\mathbf{h}}_{k'}^{*}({\mathbf{P}})\sqrt{P_{k'}}s_{k'}}_{\text{inter-user~interference}}
+{\mathbf{v}}_k^{\mathsf{H}}{\mathbf{G}}_{\rm{ul}}^{\mathsf{T}}({\mathbf{P}}){\mathbf{z}},
\end{align}
with ${\mathbf{v}}_k^{\mathsf{H}}{\mathbf{G}}_{\rm{ul}}^{\mathsf{T}}({\mathbf{P}}){\mathbf{z}}\sim{\mathcal{CN}}(0,{\mathbf{v}}_k^{\mathsf{H}}{\mathbf{R}}_{\mathbf{z}}({\mathbf{P}}){\mathbf{v}}_k)$ denoting the effective noise. Therefore, the resulting SINR for decoding $s_k$ can be given as follows:
\begin{align}\label{per_user_sinr}
\gamma_{{\rm{ul}},k}=\frac{P_k\lvert{\mathbf{v}}_k^{\mathsf{H}}{\mathbf{G}}_{\rm{ul}}^{\mathsf{T}}({\mathbf{P}}){\mathbf{h}}_k^{*}({\mathbf{P}})\rvert^2}
{\sum_{k'\ne k}P_{k'}\lvert{\mathbf{v}}_k^{\mathsf{H}}{\mathbf{G}}_{\rm{ul}}^{\mathsf{T}}({\mathbf{P}}){\mathbf{h}}_{k'}^{*}({\mathbf{P}})\rvert^2+{\mathbf{v}}_k^{\mathsf{H}}{\mathbf{R}}_{\mathbf{z}}({\mathbf{P}}){\mathbf{v}}_k}.
\end{align}
\subsection{Channel Model}
Similar to existing works \cite{ding2024flexible,tegos2024minimum,ouyang2025array,wang2025modeling}, we assume that the users have LoS links to the waveguides, which is typically the case in indoor environments. Non-LoS (NLoS) paths are considered negligible due to their significantly weaker signal strength compared to LoS paths. Since each PA acts as an isotropic receiver, the channel coefficient between user $k$ and the $(m,n)$th PA is expressed as follows \cite{liu2023near-field}:
\begin{align}
h_{k,m,n}=\frac{\lambda{\rm{e}}^{-{\rm{j}}k_0 d_k(p_{m,n})}}{4\pi d_k(p_{m,n})},
\end{align}
where $d_k(p_{m,n}) = \lVert{\mathbf{p}}_{m,n}-{\mathbf{u}}_k\rVert$ denotes the Euclidean distance between the $(m,n)$th PA and user $k$, $\lambda$ is the carrier wavelength in free space, and $k_0=\frac{2\pi}{\lambda}$ is the wavenumber.
\subsection{Problem Formulation}
In this paper, we aim to maximize the downlink and uplink sum-rates by jointly optimizing the \emph{transmit and receive beamforming} at the BS and the \emph{pinching beamforming} facilitated by the PAs.
\subsubsection{Downlink PASS}
Denote ${\mathbf{W}}=[{\mathbf{w}}_1,\ldots,{\mathbf{w}}_K]\in{\mathbbmss{C}}^{M\times K}$. The problem of maximizing the downlink sum-rate can be formulated as follows:
\begin{subequations}\label{DL_Sum_Rate_Problem}
\begin{align}
\max_{{\mathbf{W}},{\mathbf{P}}}&~{\mathcal{R}}_{\rm{dl}}({\mathbf{W}},{\mathbf{P}})=\sum_{k=1}^{K}\log_2(1+\gamma_{{\rm{dl}},k})\\
{\rm{s.t.}}&~{\rm{tr}}({\mathbf{W}}{\mathbf{W}}^{\mathsf{H}})=\sum_{k=1}^{K}\lVert{\mathbf{w}}_k\rVert^2\leq P,\\
&~ p_{m,n}-p_{m,n-1}\geq \Delta,\forall m\in[M],n\ne1,\label{dl_c1}\\
&~p_{m,n}-o_m\in[0,L_m],\forall m\in[M],n\in[N],\label{dl_c2}
\end{align}
\end{subequations}
where $P$ is the power budget, $\Delta>0$ is the minimum spacing required to prevent mutual coupling between the PAs, and $L_m>0$ is the maximum available length of the $m$th waveguide for $m\in[M]$.
\subsubsection{Uplink PASS}
Let ${\mathbf{V}}=[{\mathbf{v}}_1,\ldots,{\mathbf{v}}_K]\in{\mathbbmss{C}}^{M\times K}$. The problem of maximizing the uplink sum-rate can be formulated as follows:
\begin{subequations}\label{UL_Sum_Rate_Problem}
\begin{align}
\max_{{\mathbf{V}},{\mathbf{P}}}&~{\mathcal{R}}_{\rm{ul}}({\mathbf{V}},{\mathbf{P}})=\sum_{k=1}^{K}\log_2(1+\gamma_{{\rm{ul}},k})\\
{\rm{s.t.}}&~ p_{m,n}-p_{m,n-1}\geq \Delta,\forall m\in[M],n\ne1,\label{ul_c1}\\
&~p_{m,n}-o_m\in[0,L_m],\forall m\in[M],n\in[N].\label{ul_c2}
\end{align}
\end{subequations}

Constraints \eqref{dl_c1} and \eqref{ul_c1} consider a practical physical condition where each PA can slide along its corresponding waveguide, similar to an electrical curtain. The sliding range of each PA is not only limited by the waveguide length but also constrained by the positions of adjacent PAs. However, according to \cite[\textit{Lemma~1}]{bereyhi2025mimopass}, the system throughput is invariant to the permutation of PAs. Therefore, these constraints can be simplified as $|p_{m,n}-p_{m,n'}|\geq \Delta,\forall m\in[M],n\ne n'$. For simplicity, in the following discussions, the constraints referred to as \eqref{dl_c1} and \eqref{ul_c1} will both denote this relaxed form.

\section{Downlink Beamforming Design}
Jointly optimizing ${\mathbf{W}}$ and ${\mathbf{P}}$ is a challenging task, and solving such a problem is generally time-consuming. As a compromise, in this work, we consider that the transmit beamforming is designed using classical linear approaches, including MRT, ZF, and MMSE. For convenience, we denote 
\begin{align}
{\mathbf{H}}_{\rm{dl}}({\mathbf{P}})=\left[\begin{matrix}{\mathbf{h}}_1^{\mathsf{H}}({\mathbf{P}}){\mathbf{G}}_{\rm{dl}}({\mathbf{P}})\\\vdots\\
{\mathbf{h}}_K^{\mathsf{H}}({\mathbf{P}}){\mathbf{G}}_{\rm{dl}}({\mathbf{P}})\end{matrix}\right]
=\left[\begin{matrix}{\mathbf{h}}_1^{\mathsf{H}}({\mathbf{P}})\\\vdots\\
{\mathbf{h}}_K^{\mathsf{H}}({\mathbf{P}})\end{matrix}\right]{\mathbf{G}}_{\rm{dl}}({\mathbf{P}})\in{\mathbbmss{C}}^{K\times M}
\end{align}
as the effective channel.

Recalling problem \eqref{DL_Sum_Rate_Problem}, the following lemma shows that the power inequality constraint can be equivalently reformulated as an equality constraint. Leveraging this lemma, we normalize the transmit power for subsequent linear beamforming methods. Based on this, we then derive the update rules for the positions of the PAs corresponding to each scheme.
\begin{lemma}\label{Equality power constraint}
The optimal solution $\mathbf{W}^{\star}$ to problem \eqref{DL_Sum_Rate_Problem} always satisfies the constraint ${\rm{tr}}(\mathbf{W}^{\star}{\mathbf{W}^{\star}}^{\mathsf{H}})=\sum_{k=1}^{K}\lVert{\mathbf{w}}_k^{\star}\rVert^2= P$.
\end{lemma}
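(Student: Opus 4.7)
The plan is to proceed by contradiction, exploiting a simple uniform rescaling of $\mathbf{W}^{\star}$. Suppose, for the sake of contradiction, that an optimal $\mathbf{W}^{\star}$ satisfies ${\rm{tr}}(\mathbf{W}^{\star}{\mathbf{W}^{\star}}^{\mathsf{H}}) = P' < P$. I would first dispose of the trivial corner case $P' = 0$: then $\mathbf{W}^{\star} = \mathbf{0}$, every SINR is zero, and any nonzero feasible $\mathbf{W}$ (e.g.\ an arbitrarily small MRT beamformer) yields a strictly positive sum-rate, contradicting optimality.

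The main case is $0 < P' < P$. Here I would construct $\tilde{\mathbf{W}} = \alpha \mathbf{W}^{\star}$ with $\alpha = \sqrt{P/P'} > 1$. Feasibility is immediate: ${\rm{tr}}(\tilde{\mathbf{W}}\tilde{\mathbf{W}}^{\mathsf{H}}) = \alpha^{2} P' = P$, and the position constraints \eqref{dl_c1}--\eqref{dl_c2} are unaffected since $\mathbf{P}$ is not touched. The next step is to compare SINRs. Substituting $\tilde{\mathbf{w}}_k = \alpha \mathbf{w}_k^{\star}$ into \eqref{SINR_dl} gives
\begin{align}
\tilde{\gamma}_{{\rm{dl}},k}
= \frac{\alpha^{2}\lvert{\mathbf{h}}_k^{\mathsf{H}}{\mathbf{G}}_{\rm{dl}}{\mathbf{w}}_k^{\star}\rvert^{2}}
       {\alpha^{2}\sum_{k'\ne k}\lvert{\mathbf{h}}_k^{\mathsf{H}}{\mathbf{G}}_{\rm{dl}}{\mathbf{w}}_{k'}^{\star}\rvert^{2}+\sigma_k^{2}},
\end{align}
where I have suppressed the $(\mathbf{P})$ arguments. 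A short calculation (multiplying out $\tilde{\gamma}_{{\rm{dl}},k} > \gamma_{{\rm{dl}},k}^{\star}$ and cancelling the common signal factor) reduces the desired inequality to $\alpha^{2}\sigma_k^{2} > \sigma_k^{2}$, which holds strictly because $\alpha > 1$ and $\sigma_k^{2} > 0$.

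Hence $\tilde{\gamma}_{{\rm{dl}},k} > \gamma_{{\rm{dl}},k}^{\star}$ for every user $k$ with $\mathbf{w}_k^{\star}\neq \mathbf{0}$, and $\tilde{\gamma}_{{\rm{dl}},k} = \gamma_{{\rm{dl}},k}^{\star} = 0$ for users with $\mathbf{w}_k^{\star} = \mathbf{0}$. Since $P' > 0$ forces at least one $\mathbf{w}_k^{\star}\neq \mathbf{0}$, monotonicity of $\log_{2}(1+\cdot)$ yields $\mathcal{R}_{\rm{dl}}(\tilde{\mathbf{W}},\mathbf{P}) > \mathcal{R}_{\rm{dl}}(\mathbf{W}^{\star},\mathbf{P})$, contradicting the optimality of $\mathbf{W}^{\star}$. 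This forces ${\rm{tr}}(\mathbf{W}^{\star}{\mathbf{W}^{\star}}^{\mathsf{H}}) = P$ and completes the proof.

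There is no genuinely hard step: the content of the lemma is that the noise term $\sigma_k^{2}$ in the SINR denominator does not scale with the transmit power, so uniform amplification strictly improves every active user's SINR without altering the interference-to-signal ratio. The only subtlety worth spelling out carefully is the degenerate $P' = 0$ case, which I handle separately as above.
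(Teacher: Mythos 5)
Your proof is correct and follows essentially the same route as the paper: a contradiction argument via uniform scaling of the beamformers by $\sqrt{P/P'}$, with the strict SINR increase coming from the fact that the noise term $\sigma_k^2$ does not scale. You are in fact slightly more careful than the paper's version, which glosses over the degenerate case $P'=0$ and the users with $\mathbf{w}_k^\star=\mathbf{0}$ that you handle explicitly.
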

\begin{proof}
This lemma can be proven by contradiction. Suppose $\{\hat{\mathbf{w}}_k\}_{k=1}^K$ is the solution to problem \eqref{DL_Sum_Rate_Problem}, satisfying the power constraint $\sum_{k=1}^{K}\|\hat{\mathbf{w}}_k\|^2 = \hat{P} < P$. Then, we define a scaled solution $\mathbf{w}_k = \sqrt{\frac{P}{\hat{P}}}\hat{\mathbf{w}}_k$, which satisfies $\sum_{k=1}^{K}\|\mathbf{w}_k\|^2 = P$. Substituting $\hat{\mathbf{w}}_k$ and $\mathbf{w}_k$ into \eqref{SINR_dl} yields $\hat{\gamma}_{{\rm{dl}},k}$ and $\gamma_{{\rm{dl}},k}$, respectively. 
It can be readily shown that $\hat{\gamma}_{{\rm{dl}},k}< \gamma_{{\rm{dl}},k}$. This contradicts the initial assumption, which implies that $\{\hat{\mathbf{w}}_k\}_{k=1}^K$ are not the optimal solutions. Therefore, the proof is complete.
\end{proof}

\subsection{Maximum-Ratio Transmission}
Given the PAs' positions $\mathbf{P}$, the MRT beamformer can be obtained according to \cite{heath2018foundations} as follows:
\begin{align}\label{DL_MRT}
{\mathbf{W}_{\rm{MRT}}}&=\sqrt{\frac{P}{{\rm{tr}}({\mathbf{H}}_{\rm{dl}}({\mathbf{P}}){\mathbf{H}}_{\rm{dl}}^{\mathsf{H}}({\mathbf{P}}))}}{\mathbf{H}}_{\rm{dl}}^{\mathsf{H}}({\mathbf{P}}),
\end{align}
which satisfies the equality power constraint in \textbf{Lemma \ref{Equality power constraint}}. 
By substituting equation \eqref{DL_MRT} into \eqref{SINR_dl}, We obtain the following simplified SINR expression, as shown in \eqref{SINR_dl_MRT} at the top of next page.
\begin{figure*}[!t]
\normalsize
\begin{align}\label{SINR_dl_MRT}
\gamma_{{\rm{dl}},k}^{\rm{MRT}}(\mathbf{P})&=\frac{\|{\mathbf{h}}_k^{\mathsf{H}}({\mathbf{P}}){\mathbf{G}}_{\rm{dl}}({\mathbf{P}})\|^4}
{\sum_{k'\ne k}\lvert{\mathbf{h}}_k^{\mathsf{H}}({\mathbf{P}}){\mathbf{G}}_{\rm{dl}}({\mathbf{P}}){\mathbf{G}}^{\mathsf{H}}_{\rm{dl}}({\mathbf{P}}){\mathbf{h}}_{k'}({\mathbf{P}})\rvert^2+\frac{{\rm{tr}}({\mathbf{H}}_{\rm{dl}}({\mathbf{P}}){\mathbf{H}}_{\rm{dl}}^{\mathsf{H}}({\mathbf{P}}))\sigma_k^2 }{P}}.
\end{align}
\hrulefill
\end{figure*}
%Here $\mathbf{\Psi}(\mathbf{P})={\mathbf{H}}_{\rm{dl}}({\mathbf{P}}){\mathbf{H}}_{\rm{dl}}^{\mathsf{H}}({\mathbf{P}})$, $[\mathbf{\Psi}(\mathbf{P})]_{k,k}=\sum_{m=1}^M\lvert[{\mathbf{H}}_{\rm{dl}}({\mathbf{P}})]_{k,m}\rvert^2$ denotes the $k$th row and the $k$th column of $\mathbf{\Psi}(\mathbf{P})$, and $[\mathbf{\Psi}(\mathbf{P})]_{k,k'}=\sum_{m=1}^M[{\mathbf{H}}_{\rm{dl}}({\mathbf{P}})]_{k,m}[{\mathbf{H}}^{\mathsf{H}}_{\rm{dl}}({\mathbf{P}})]_{m,k'}$ denotes the $k$th row and the $k'$th column of $\mathbf{\Psi}(\mathbf{P})$, where $[{\mathbf{H}}_{\rm{dl}}({\mathbf{P}})]_{k,m}={\bar{\mathbf{h}}}_k^{\mathsf{T}}({\mathbf{p}_m}){\mathbf{g}}_{\rm{dl}}({\mathbf{p}}_{m})$ and $[{\mathbf{H}}^{\mathsf{H}}_{\rm{dl}}({\mathbf{P}})]_{m,k'} = {\mathbf{g}}_{\rm{dl}}^{\mathsf{H}}({\mathbf{p}}_{m}){\bar{\mathbf{h}}}_{k'}^{\mathsf{*}}({\mathbf{p}_m})$.

Therefore, the joint optimization problem in \eqref{DL_Sum_Rate_Problem} can be simplified into the following problem with respect to $\mathbf{P}$:
\begin{subequations}\label{DL_Sum_Rate_Problem_MRT}
\begin{align}
\max_{{\mathbf{P}}}&~{\mathcal{R}}_{\rm{dl}}^{\rm{MRT}}({\mathbf{P}})=\sum_{k=1}^{K}\log_2(1+\gamma_{{\rm{dl}},k}^{\rm{MRT}}(\mathbf{P}))\label{DL_Sum_Rate_Problem_MRT_obj}\\
{\rm{s.t.}}&~\eqref{dl_c1},~\eqref{dl_c2},
\end{align}
\end{subequations}

Observing problem \eqref{DL_Sum_Rate_Problem_MRT}, it can be found that all optimization variables $\{p_{m,n}, \forall m,n\}$ are coupled, making it challenging to obtain the optimal solution directly. Therefore, an element-wise sequential optimization method with one-dimensional search can be employed to find a locally optimal solution for $\mathbf{P}$. Specifically, the position $p_{m,n}$ can be determined via a one-dimensional search, while keeping all other PAs' positions fixed.
Then, the objective function with respect to the single variable $p_{mn}$ can be reformulated as follows:
%\begin{align}
%{\mathcal{R}}_{\rm{dl}}^{\rm{MRT}}(p_{m,n})&=\sum_{k=1}^{K}\log_2\left(\sum_{k'=1}^{K}\left[\|a_{k,k'}(\mathbf{p}_m)\|^2+2\Re\{a_{k,k'}(\mathbf{p}_m)c_{k,k'}^\mathsf{H}\}+a_{k',k'}(\mathbf{p}_m)\frac{\sigma_k^2}{P}+\|c_{k,k'}\|^2+c_{k',k'}\right]\right)\nonumber\\
%&-\sum_{k=1}^{K}\log_2\left(\sum_{k'\neq k}^{K}\left[\|a_{k,k'}(\mathbf{p}_m)\|^2+2\Re\{a_{k,k'}(\mathbf{p}_m)c_{k,k'}^\mathsf{H}\}+\|c_{k,k'}\|^2\right]+\sum_{k'=1}^{K}\left[a_{k',k'}(\mathbf{p}_m)\frac{\sigma_k^2}{P}+c_{k',k'}\right]\right),
%%&=\frac{\left(\sum_{m=1}^M\lvert{\bar{\mathbf{h}}}_k^{\mathsf{T}}({\mathbf{p}_m}){\mathbf{g}}_{\rm{dl}}({\mathbf{p}}_{m})\rvert^2\right)^2}
%%{\sum_{k'\ne k}\left\lvert\sum_{m=1}^M{\bar{\mathbf{h}}}_k^{\mathsf{T}}({\mathbf{p}_m}){\mathbf{g}}_{\rm{dl}}({\mathbf{p}}_{m}){\mathbf{g}}_{\rm{dl}}^{\mathsf{H}}({\mathbf{p}}_{m}){\bar{\mathbf{h}}}_{k'}^{\mathsf{*}}({\mathbf{p}_m})\right\lvert^2+\frac{\sigma_k^2 P}{\sum_{k=1}^K\sum_{m=1}^M\lvert{\bar{\mathbf{h}}}_k^{\mathsf{T}}({\mathbf{p}_m}){\mathbf{g}}_{\rm{dl}}({\mathbf{p}}_{m})\rvert^2}}.
%\end{align}
\begin{align}\label{DL_MRT_simplest}
{\mathcal{R}}_{\rm{dl}}^{\rm{MRT}}(p_{m,n})&=\sum_{k=1}^{K}\log_2\left(1+\frac{\mathcal{A}_k(p_{m,n})}{\mathcal{B}_k(p_{m,n})-\mathcal{A}_k(p_{m,n})}\right).
\end{align}
Here, $\mathcal{A}_k(p_{m,n})$ and $\mathcal{B}_k(p_{m,n})$ are given as follows:
\begin{subequations}
\begin{align}
&\mathcal{A}_k(p_{m,n})=\lvert a_{k,k}(\mathbf{p}_m)\rvert^2+2a_{k,k}(\mathbf{p}_m)c_{k,k}+\lvert c_{k,k}\rvert^2,\\
&\mathcal{B}_k(p_{m,n})=\sum_{k'=1}^{K}\Big[\lvert a_{k,k'}(\mathbf{p}_m)\rvert^2+2\Re\{a_{k,k'}(\mathbf{p}_m)c_{k,k'}^\mathsf{H}\}\nonumber\\
&\quad\quad\quad\quad+\left(a_{k',k'}(\mathbf{p}_m)+c_{k',k'}\right)\frac{\sigma_k^2}{P}+\lvert c_{k,k'}\rvert^2\Big].
\end{align}
\end{subequations}
and the term $a_{k,k'}(\mathbf{p}_m)=\bar{a}_k(p_{m,n})\bar{a}_{k'}^\mathsf{H}(p_{m,n})$, $c_{k,k'}=\sum_{m'\neq m}^{M}a_{k,k'}(\mathbf{p}_{m'})$, where $\bar{a}_k(p_{m,n})$ is given as follows:
\begin{align}\label{apmn}
\bar{a}_k(p_{m,n})=\Pi_k(p_{m,n})+\sum_{n'\neq n}^{N}\Pi_k(p_{m,n'}),
\end{align}
with 
\begin{align}\label{channel_all_PAs}
\Pi_k(p_{m,n}) = \frac{\lambda\sqrt{\eta_{m,n}}}{4\pi d_k(p_{m,n})}e^{-j[k_0 d_k(p_{m,n}) + k_{\rm{g}}(p_{m,n}-o_m)]}.
\end{align}

By discretizing the entire waveguide length $L_m$ into $N_s$ sampling points with a segment length of $\delta_m = \frac{L_m}{N_s-1}$, the set of candidate positions for each waveguide can be defined as follows:
\begin{equation}
\mathcal{X}_m \triangleq \left\{o_m + i\delta_m \;\middle|\; i \in \mathcal{I}_m \right\},\forall m\in[M],
\end{equation}
where $\mathcal{I}_m =\{ 0,1,\dots,N_s - 1\}$ denotes the all candidate indices. As the number of sampling points approaches infinity, PAs tend to be continuously activated.
An approximate optimal position $p_{m,n}$ can be obtained by selecting  
\begin{equation}\label{position_MRT}
p_{m,n}= \arg\max_{p_{m,n} \in \mathcal{X}_m/\mathcal{X}_m(\hat{\mathcal{I}}_m)} {\mathcal{R}}_{\rm{dl}}^{\rm{MRT}}(p_{m,n}),
\end{equation}
where $/$ denotes the set difference, and $\mathcal{X}_m(\hat{\mathcal{I}}_m)$ represents the set $\mathcal{X}_m$ corresponding to the indices $i \in \hat{\mathcal{I}}_m$. The set $\hat{\mathcal{I}}_m$ is defined as follows:
\begin{equation}
\hat{\mathcal{I}}_m=
\begin{cases}
  \displaystyle
  \mathcal{I}_m \cap\bigcup_{n'=1}^{n-1} 
  \left\{ i \;\middle|\;
  i \in \left\{ i_{m,n'}^{\rm floor}, 
  \dots, 
  i_{m,n'}^{\rm ceil} \right\} 
  \right\}, & n \neq 1 \\
  \varnothing, & n=1,
\end{cases}
\end{equation}
where $i_{m,n'}^{\rm floor}=\left\lfloor \frac{p_{m,n'}-o_m-\Delta}{\delta_m} \right\rfloor$, $i_{m,n'}^{\rm ceil}=\left\lceil \frac{p_{m,n'}-o_m+\Delta}{\delta_m} \right\rceil$, ensuring that the distance between each PA on the same waveguide satisfies the constraint \eqref{dl_c1}. By using \eqref{position_MRT} to sequentially update $\{p_{m,n}, \forall m,n\}$, a locally optimal solution to \eqref{DL_Sum_Rate_Problem_MRT} can be obtained. 

Taking the downlink MRT baseband beamforming as an example, the proposed element-wise sequential optimization procedure is summarized in \textbf{Algorithm~\ref{Algorithm}}. The subsequent algorithms follow a similar procedure, with simple modifications to lines 4 and 6 of \textbf{Algorithm~\ref{Algorithm}}.

\begin{algorithm}[!t]
\caption{\textbf{The Proposed Low-complexity Pinching Beamforming Algorithm}}\label{Algorithm}
\begin{algorithmic}[1]
\State Initialize the positions of PAs $\mathbf{P}$.
\Repeat
\For{$m \in \{1, \ldots, M\}$}
		\State Compute the $\mathbf{p}_m$-independent constant part of \Statex \hspace{\algorithmicindent}\hspace{\algorithmicindent}\eqref{DL_MRT_simplest}: $c_{k,k'}, \forall k,k'\in{\{1,\ldots,K\}}$.
			\For{$n \in \{1, \ldots, N\}$ }
        \State Update $p_{m,n}$ by solving problem \eqref{position_MRT} through \Statex \hspace{\algorithmicindent}\hspace{\algorithmicindent}\hspace{\algorithmicindent}one-dimensional search.
    		\EndFor
\EndFor
\Until{the fractional decrease of the objective value \eqref{DL_Sum_Rate_Problem_MRT_obj} \Statex \hspace{\algorithmicindent}falls below a predefined threshold}.
\State Calculate ${\mathbf{W}_{\rm{MRT}}}$ as \eqref{DL_MRT}.
\end{algorithmic}
\end{algorithm}

\subsection{Zero-Forcing}
In this case, given the PAs' positions $\mathbf{P}$, the power-normalized ZF beamformer can be obtained according to \cite{heath2018foundations} as follows:
\begin{align}\label{DL_ZF}
{\mathbf{W}_{\rm{ZF}}}&=\sqrt{\frac{P}{{\rm{tr}}(({\mathbf{H}}_{\rm{dl}}({\mathbf{P}}){\mathbf{H}}_{\rm{dl}}^{\mathsf{H}}({\mathbf{P}}))^{-1})}}{\mathbf{H}}_{\rm{dl}}^{\mathsf{H}}({\mathbf{P}})({\mathbf{H}}_{\rm{dl}}({\mathbf{P}}){\mathbf{H}}_{\rm{dl}}^{\mathsf{H}}({\mathbf{P}}))^{-1}.
\end{align}
Substituting equation \eqref{DL_ZF} into \eqref{SINR_dl} yields the simplified SINR expression as follows:
\begin{align}\label{SINR_dl_ZF}
\gamma_{{\rm{dl}},k}^{\rm{ZF}}(\mathbf{P})=\frac{P}{{\rm{tr}}(({\mathbf{H}}_{\rm{dl}}({\mathbf{P}}){\mathbf{H}}_{\rm{dl}}^{\mathsf{H}}({\mathbf{P}}))^{-1}){\sigma_k^2}}.
\end{align}
Therefore, the optimization problem in \eqref{DL_Sum_Rate_Problem} can be simplified as follows:
\begin{subequations}\label{DL_Sum_Rate_Problem_ZF}
\begin{align}
\max_{{\mathbf{P}}}&~{\mathcal{R}}_{\rm{dl}}^{\rm{ZF}}({\mathbf{P}})=\sum_{k=1}^{K}\log_2\left(1+\frac{P}{{\rm{tr}}(({\mathbf{H}}_{\rm{dl}}({\mathbf{P}}){\mathbf{H}}_{\rm{dl}}^{\mathsf{H}}({\mathbf{P}}))^{-1}){\sigma_k^2}}\right)\\
{\rm{s.t.}}&~\eqref{dl_c1},~\eqref{dl_c2},
\end{align}
\end{subequations}
Observing problem \eqref{DL_Sum_Rate_Problem_ZF}, since the summation terms are independent and the logarithm function is monotonically increasing, the problem is equivalently transformed into the following form:
\begin{subequations}\label{DL_Sum_Rate_Problem_ZF_1}
\begin{align}
\min_{{\mathbf{P}}}&~{\rm{tr}}(({\mathbf{H}}_{\rm{dl}}({\mathbf{P}}){\mathbf{H}}_{\rm{dl}}^{\mathsf{H}}({\mathbf{P}}))^{-1})\\
{\rm{s.t.}}&~\eqref{dl_c1},~\eqref{dl_c2}.
\end{align}
\end{subequations}
Similarly, we adopt the element-wise sequential optimization method to optimize the position $p_{m,n}$, while keeping the remaining elements in $\mathbf{P}$ fixed.
Let $\breve{\mathbf{h}}(\mathbf{p}_m)\in\mathbb{C}^{K\times 1}$ denote the $m$th column of ${\mathbf{H}}_{\rm{dl}}({\mathbf{P}})$ and the $k$th entry of $\breve{\mathbf{h}}(\mathbf{p}_m)$ is $[\breve{\mathbf{h}}(\mathbf{p}_m)]_k=\bar{a}_k(p_{m,n})$ as defined in \eqref{apmn}. Then, the objective function in \eqref{DL_Sum_Rate_Problem_ZF_1} can be decomposed as follows:
\begin{align}\label{ZF_decomposed}
{\rm{tr}}(({\mathbf{H}}_{\rm{dl}}({\mathbf{P}}){\mathbf{H}}_{\rm{dl}}^{\mathsf{H}}({\mathbf{P}}))^{-1})={\rm{tr}}((\breve{\mathbf{h}}(\mathbf{p}_m)\breve{\mathbf{h}}^{\mathsf{H}}(\mathbf{p}_m)+\breve{\mathbf{H}}_{\rm{dl}})^{-1}),
\end{align}
where $\breve{\mathbf{H}}_{\rm{dl}}=\sum_{m'\neq m}^{M}\breve{\mathbf{h}}(\mathbf{p}_{m'})\breve{\mathbf{h}}^{\mathsf{H}}(\mathbf{p}_{m'})$.
To avoid the high computational complexity caused by computing the matrix inverse at each sampling point, we can invoke the Sherman-Morrison lemma to further reformulate \eqref{ZF_decomposed} as follows:
\begin{align}
&{\rm{tr}}((\breve{\mathbf{h}}(\mathbf{p}_m)\breve{\mathbf{h}}^{\mathsf{H}}(\mathbf{p}_m)+\breve{\mathbf{H}}_{\rm{dl}})^{-1})\nonumber\\
&\quad\quad\quad={\rm{tr}}\left(\breve{\mathbf{H}}_{\rm{dl}}^{-1}-\frac{\breve{\mathbf{H}}_{\rm{dl}}^{-1}\breve{\mathbf{h}}(\mathbf{p}_m)\breve{\mathbf{h}}^{\mathsf{H}}(\mathbf{p}_m)\breve{\mathbf{H}}_{\rm{dl}}^{-1}}{1+\breve{\mathbf{h}}^{\mathsf{H}}(\mathbf{p}_m)\breve{\mathbf{H}}_{\rm{dl}}^{-1}\breve{\mathbf{h}}(\mathbf{p}_m)}\right).
\end{align}
\begin{remark}\label{remark1}
Due to the random nature of wireless channels, the independence of $\{\breve{\mathbf{h}}(\mathbf{p}_m)\}_{m=1}^{M}$ naturally holds. Moreover, we assume $M > K$ to ensure that each user has sufficient spatial degrees of freedom for space-division multiple access (SDMA), which implies that $\breve{\mathbf{H}}_{\rm{dl}}({\mathbf{P}})$ is full-rank. To this end, the Sherman-Morrison lemma can be employed.
\end{remark}

Therefore, a high-quality position $p_{m,n}$ can be obtained by solving the following problem using the same method as \eqref{position_MRT}:
\begin{align}\label{position_ZF}
p_{m,n}= \arg\max_{p_{m,n} \in \mathcal{X}_m/\mathcal{X}_m(\hat{\mathcal{I}}_m)}
\frac{\breve{\mathbf{h}}^{\mathsf{H}}(\mathbf{p}_m)\breve{\mathbf{H}}_{\rm{dl}}^{-2}\breve{\mathbf{h}}(\mathbf{p}_m)}{1+\breve{\mathbf{h}}^{\mathsf{H}}(\mathbf{p}_m)\breve{\mathbf{H}}_{\rm{dl}}^{-1}\breve{\mathbf{h}}(\mathbf{p}_m)}.
\end{align}
By iteratively updating $\{p_{m,n}, \forall m,n\}$ using \eqref{position_ZF}, a locally optimal solution to \eqref{DL_Sum_Rate_Problem_ZF} can be achieved.

\subsection{Minimum Mean Square Error}
Implementing ZF beamforming requires the number of users $K$ to be no larger than the number of RF chains $M$, i.e., $K\leq M$. In overloaded scenarios where $K>M$, it becomes infeasible to nullify all inter-user interference. On the other hand, when the locations of all PAs are fixed, finding an optimal solution for the linear beamforming matrix ${\mathbf{W}}$ is a challenging task, let alone deriving a closed-form expression. Fortunately, the authors in \cite{bjornson2014optimal} have proven that transmit MMSE beamforming \cite{joham2005linear} follows the optimal beamforming structure and generally achieves promising performance. More importantly, this MMSE beamformer can be designed regardless of the relationship between $K$ and $M$, which is given by
\begin{align}\label{DL_MMSE}
{\mathbf{W}}=\sqrt{\beta({\mathbf{P}})}{\mathbf{H}}_{\rm{dl}}^{\mathsf{H}}({\mathbf{P}})\left(\frac{P}{K\sigma^2}{\mathbf{H}}_{\rm{dl}}({\mathbf{P}}){\mathbf{H}}_{\rm{dl}}^{\mathsf{H}}({\mathbf{P}})+{\mathbf{I}}_K\right)^{-1},
\end{align}
where $\beta({\mathbf{P}})=\frac{P}{{\rm{tr}}\left((\frac{P}{K\sigma^2}{\mathbf{H}}_{\rm{dl}}({\mathbf{P}}){\mathbf{H}}_{\rm{dl}}^{\mathsf{H}}({\mathbf{P}})+{\mathbf{I}}_K)^{-2}{\mathbf{H}}_{\rm{dl}}({\mathbf{P}}){\mathbf{H}}_{\rm{dl}}^{\mathsf{H}}({\mathbf{P}})\right)}$ denotes the power normalization factor.
Substituting equation \eqref{DL_MMSE} into \eqref{SINR_dl} yields the following expression\eqref{SINR_dl_MMSE}, as shown at the top of the next page,
\begin{figure*}[!t]
\vspace{-10pt}
\begin{align}\label{SINR_dl_MMSE}
\gamma_{{\rm{dl}},k}^{\rm{MMSE}}(\mathbf{P})&=\frac{\lvert{\mathbf{h}}_k^{\mathsf{H}}({\mathbf{P}}){\mathbf{G}}_{\rm{dl}}({\mathbf{P}})\mathbf{M}({\mathbf{P}}){\mathbf{G}}_{\rm{dl}}^{\mathsf{H}}({\mathbf{P}}){\mathbf{h}}_k({\mathbf{P}})\rvert^2}
{\sum_{k'\ne k}\lvert{\mathbf{h}}_k^{\mathsf{H}}({\mathbf{P}}){\mathbf{G}}_{\rm{dl}}({\mathbf{P}})\mathbf{M}({\mathbf{P}}){\mathbf{G}}_{\rm{dl}}^{\mathsf{H}}({\mathbf{P}}){\mathbf{h}}_{k'}({\mathbf{P}})\rvert^2+\frac{\sigma^2}{\beta({\mathbf{P}})}}\nonumber\\
&=\frac{{\mathbf{H}}_{\mathrm{dl}{| k}}({\mathbf{P}})\mathbf{M}({\mathbf{P}}){\mathbf{H}}_{\mathrm{dl}{| k}}^{\mathsf{H}}({\mathbf{P}}){\mathbf{H}}_{\mathrm{dl}{| k}}({\mathbf{P}})\mathbf{M}({\mathbf{P}}){\mathbf{H}}_{\mathrm{dl}{| k}}^{\mathsf{H}}({\mathbf{P}})}
{{\mathbf{H}}_{\mathrm{dl}{| k}}({\mathbf{P}})\mathbf{M}({\mathbf{P}}){\mathbf{H}}_{\mathrm{dl}{\backslash k}}^{\mathsf{H}}({\mathbf{P}}){\mathbf{H}}_{\mathrm{dl}{\backslash k}}({\mathbf{P}})\mathbf{M}({\mathbf{P}}){\mathbf{H}}_{\mathrm{dl}{| k}}^{\mathsf{H}}({\mathbf{P}})+\frac{\sigma^2}{\beta({\mathbf{P}})}},
\end{align}
\hrulefill
\end{figure*}
where ${\mathbf{H}}_{\mathrm{dl}{| k}}({\mathbf{P}})\in \mathbb{C}^{1\times M}$ and ${\mathbf{H}}_{\mathrm{dl}{\backslash k}}({\mathbf{P}})\in \mathbb{C}^{(K-1)\times M}$ represent the extraction of only the $k$-th row of ${\mathbf{H}}_{\mathrm{dl}}({\mathbf{P}})$ and the removal of the $k$-th row of ${\mathbf{H}}_{\mathrm{dl}}({\mathbf{P}})$, respectively, and $\mathbf{M}({\mathbf{P}})=\left(\frac{P}{K\sigma^2}{\mathbf{H}}_{\rm{dl}}^{\mathsf{H}}({\mathbf{P}}){\mathbf{H}}_{\rm{dl}}({\mathbf{P}})+{\mathbf{I}}_M\right)^{-1}$.
Similarly, to avoid matrix inversion at each sampling point, we first apply the Woodbury matrix identity to transform $\mathbf{M}({\mathbf{P}})$ into:
\begin{align}
&\mathbf{M}({\mathbf{P}})={\mathbf{I}}_M \nonumber \\
&\quad-\frac{P}{K\sigma^2}{\mathbf{H}}_{\rm{dl}}^{\mathsf{H}}({\mathbf{P}})\left({\mathbf{I}}_K+\frac{P}{K\sigma^2}{\mathbf{H}}_{\rm{dl}}({\mathbf{P}}){\mathbf{H}}_{\rm{dl}}^{\mathsf{H}}({\mathbf{P}})\right)^{-1}{\mathbf{H}}_{\rm{dl}}({\mathbf{P}}).
\end{align}
Then, by applying the Sherman-Morrison formula, the matrix inversion operation can be eliminated when updating the position $p_{m,n}$ using the following equation:
\begin{align}\label{inverse_eliminated}
&\left({\mathbf{I}}_K+\frac{P}{K\sigma^2}{\mathbf{H}}_{\rm{dl}}({\mathbf{P}}){\mathbf{H}}_{\rm{dl}}^{\mathsf{H}}({\mathbf{P}})\right)^{-1}\nonumber\\
&\quad\quad\quad\quad\quad=\hat{\mathbf{H}}_{\rm{dl}}^{-1}-\frac{\hat{\mathbf{H}}_{\rm{dl}}^{-1}\breve{\mathbf{h}}(\mathbf{p}_m)\breve{\mathbf{h}}^{\mathsf{H}}(\mathbf{p}_m)\hat{\mathbf{H}}_{\rm{dl}}^{-1}}{\frac{K\sigma^2}{P}+\breve{\mathbf{h}}^{\mathsf{H}}(\mathbf{p}_m)\hat{\mathbf{H}}_{\rm{dl}}^{-1}\breve{\mathbf{h}}(\mathbf{p}_m)},
\end{align}
where $\breve{\mathbf{h}}(\mathbf{p}_m)\in\mathbb{C}^{K\times 1}$ also denotes the $m$th column of ${\mathbf{H}}_{\rm{dl}}({\mathbf{P}})$ and $\hat{\mathbf{H}}_{\rm{dl}}$ is defined as $\hat{\mathbf{H}}_{\rm{dl}}=\mathbf{I}_K + \frac{P}{K\sigma^2}\sum_{m'\neq m}^{M}\breve{\mathbf{h}}(\mathbf{p}_{m'})\breve{\mathbf{h}}^{\mathsf{H}}(\mathbf{p}_{m'})$. The same operation in \eqref{inverse_eliminated} can also be applied to $\beta({\mathbf{P}})$.

Therefore, the joint optimization problem in \eqref{DL_Sum_Rate_Problem} can be simplified into the following problem with respect to $\mathbf{P}$:
\begin{subequations}\label{DL_Sum_Rate_Problem_MRT}
\begin{align}
\max_{{\mathbf{P}}}&~{\mathcal{R}}_{\rm{dl}}^{\rm{MMSE}}({\mathbf{P}})=\sum_{k=1}^{K}\log_2(1+\gamma_{{\rm{dl}},k}^{\rm{MMSE}}(\mathbf{P}))\\
{\rm{s.t.}}&~\eqref{dl_c1},~\eqref{dl_c2}.
\end{align}
\end{subequations}
For the update of each PA's position $p_{m,n}$, it suffices to compute $\breve{\mathbf{h}}(\mathbf{p}_m)$ at each sampling point, and then a near-optimal position can be selected by solving the following problem
\begin{equation}\label{position_MRT}
p_{m,n}= \arg\max_{p_{m,n} \in \mathcal{X}_m/\mathcal{X}_m(\hat{\mathcal{I}}_m)} {\mathcal{R}}_{\rm{dl}}^{\rm{MMSE}}(\mathbf{P}).
\end{equation}

\section{Uplink Beamforming Design}
Following the methodology used in downlink PASS, we consider that the receive beamforming is designed using classical linear approaches, including MRC, ZF, and MMSE. For convenience, we define the effective channel as:
\begin{subequations}
\begin{align}
{\mathbf{H}}_{\rm{ul}}({\mathbf{P}})&=[{\mathbf{G}}_{\rm{ul}}^{\mathsf{T}}({\mathbf{P}}){\mathbf{h}}_1^{*}({\mathbf{P}}),\ldots,{\mathbf{G}}_{\rm{ul}}^{\mathsf{T}}({\mathbf{P}}){\mathbf{h}}_K^{*}({\mathbf{P}})]\\
&={\mathbf{G}}_{\rm{ul}}^{\mathsf{T}}({\mathbf{P}})[{\mathbf{h}}_1^{*}({\mathbf{P}}),\ldots,
{\mathbf{h}}_K^{*}({\mathbf{P}})]\in{\mathbbmss{C}}^{M\times K}.
\end{align}
\end{subequations}
\subsection{Maximum-Ratio Combining}
Given the PAs' positions $\mathbf{P}$, the MRC receive beamforming is given as follows\cite{heath2018foundations}:
\begin{align}\label{UL_MRC}
{\mathbf{V}}={\mathbf{H}}_{\rm{ul}}({\mathbf{P}}).
\end{align}
By substituting equation \eqref{UL_MRC} into \eqref{per_user_sinr}, we obtain the following simplified SINR expression \eqref{per_user_sinr_MRC}, as shown at the top of the next page.
\begin{figure*}[!t]
\vspace{-10pt}
\begin{align}\label{per_user_sinr_MRC}
\gamma_{{\rm{ul}},k}^{\rm{MRC}}({\mathbf{P}})=\frac{P_k\|{\mathbf{G}}_{\rm{ul}}^{\mathsf{T}}({\mathbf{P}}){\mathbf{h}}_k^{*}({\mathbf{P}})\|^4}
{{\mathbf{h}}_k^{\mathsf{T}}({\mathbf{P}}){\mathbf{G}}_{\rm{ul}}^{*}({\mathbf{P}})\left(\sum_{k'\ne k}P_{k'}{\mathbf{G}}_{\rm{ul}}^{\mathsf{T}}({\mathbf{P}}){\mathbf{h}}_{k'}^{*}({\mathbf{P}}){\mathbf{h}}_{k'}^{\mathsf{T}}({\mathbf{P}}){\mathbf{G}}_{\rm{ul}}^{*}({\mathbf{P}})+{\mathbf{R}}_{\mathbf{z}}({\mathbf{P}})\right){\mathbf{G}}_{\rm{ul}}^{\mathsf{T}}({\mathbf{P}}){\mathbf{h}}_k^{*}({\mathbf{P}})}.
\end{align}
\hrulefill
\end{figure*}
Therefore, the joint optimization problem in \eqref{UL_Sum_Rate_Problem} can be simplified into the following problem with respect to $\mathbf{P}$:
\begin{subequations}\label{UL_Sum_Rate_Problem_MRC}
\begin{align}
\max_{{\mathbf{P}}}&~{\mathcal{R}}_{\rm{ul}}^{\rm{MRC}}({\mathbf{P}})=\sum_{k=1}^{K}\log_2(1+\gamma_{{\rm{ul}},k}^{\rm{MRC}}(\mathbf{P}))\\
{\rm{s.t.}}&~\eqref{ul_c1},~\eqref{ul_c2}.
\end{align}
\end{subequations}
When updating $p_{m,n}$ while keeping the other PAs' positions fixed, $\gamma_{{\rm{ul}},k}^{\rm{MRC}}(\mathbf{P})$ can be reformulated as follows:
\begin{align}\label{per_user_sinr_MRC_1}
\gamma_{{\rm{ul}},k}^{\rm{MRC}}(p_{m,n})=\frac{P_k\mathcal{A}_k(p_{m,n})}
{\mathcal{D}_k(p_{m,n})+C}.
\end{align}
where $\mathcal{D}_k(p_{m,n})$ is defined as \eqref{Dk}, as shown at the top of the next page, 
\begin{figure*}[!t]
\vspace{-15pt}
\begin{align}\label{Dk}
\mathcal{D}_k(p_{m,n}) = \sum_{m'=1}^M\sum_{k'\neq k}P_{k'} {a}_{k,k'}^{\mathsf{H}}(\mathbf{p}_{m}){a}_{k,k'}(\mathbf{p}_{m'})+\sum_{m'\neq m}\sum_{k'\neq k}P_{k'}{a}_{k,k'}^{\mathsf{H}}(\mathbf{p}_{m'})a_{k,k'}(\mathbf{p}_{m})+ a_{k,k}(\mathbf{p}_{m})\sigma^2\|\mathbf{g}_{\rm{ul}}(\mathbf{p}_{m})\|^2.
\end{align}
\hrulefill
\end{figure*}
$C = \sum_{m'\neq m}\sum_{i\neq m}\sum_{k'\neq k}P_{k'} a_{k,k'}^{\mathsf{H}}(\mathbf{p}_{i})a_{k,k'}(\mathbf{p}_{m'})+\sum_{m'\neq m}a_{k,k}(\mathbf{p}_{m'})\sigma^2\|\mathbf{g}_{\rm{ul}}(\mathbf{p}_{m'})\|^2$ is a constant that is independent of $p_{m,n}$. And the definitions of $\mathcal{A}_k(p_{m,n})$ and $a_{k,k'}(\mathbf{p}_{m})$ remain consistent with those in the downlink MRT section. The only difference is that in equation \eqref{channel_all_PAs}, $\eta_{m,n}$ is replaced by $\zeta_{m,n}$.

Therefore, an approximate optimal position $p_{m,n}$ can be obtained by selecting  
\begin{equation}\label{position_MRC}
p_{m,n}= \arg\max_{p_{m,n} \in \mathcal{X}_m/\mathcal{X}_m(\hat{\mathcal{I}}_m)} \sum_{k=1}^{K}\log_2(1+\gamma_{{\rm{ul}},k}^{\rm{MRC}}(p_{m,n})),
\end{equation}
By sequentially updating all PAs' positions, \eqref{UL_Sum_Rate_Problem_MRC} can be solved.
%$\sum_{k'\neq k} \bar{a}_k^{\mathsf{H}}(\mathbf{p}_{m})\bar{a}_{k'}(\mathbf{p}_{m})\bar{a}_{k'}^{\mathsf{H}}(\mathbf{p}_{1})$,~~~$C_1=\sum_{i\neq m}\sum_{k'\neq k} \bar{a}_k^{\mathsf{H}}(\mathbf{p}_{i})\bar{a}_{k'}(\mathbf{p}_{i})\bar{a}_{k'}^{\mathsf{H}}(\mathbf{p}_{1})$
%
%$\sum_{m'=1}^M\sum_{k'\neq k} \bar{a}_k^{\mathsf{H}}(\mathbf{p}_{m})\bar{a}_{k'}(\mathbf{p}_{m})\bar{a}_{k'}^{\mathsf{H}}(\mathbf{p}_{m'})\bar{a}_k(\mathbf{p}_{m'})+\sum_{m'\neq m}\bar{a}_k^{\mathsf{H}}(\mathbf{p}_{m'})\bar{a}_{k'}(\mathbf{p}_{m'})\bar{a}_{k'}^{\mathsf{H}}(\mathbf{p}_{m})\bar{a}_k(\mathbf{p}_{m})+\bar{a}_k^{\mathsf{H}}(\mathbf{p}_{m})\|\mathbf{g}_{\rm{ul}}(\mathbf{p}_{m})\|\bar{a}_k(\mathbf{p}_{m})$
%
%$\sum_{m'\neq m}\sum_{i\neq m}\sum_{k'\neq k} \bar{a}_k^{\mathsf{H}}(\mathbf{p}_{i})\bar{a}_{k'}(\mathbf{p}_{i})\bar{a}_{k'}^{\mathsf{H}}(\mathbf{p}_{m'})\bar{a}_{k}(\mathbf{p}_{m'})$

\subsection{Zero-Forcing}
In this case, given the PAs' positions $\mathbf{P}$, the ZF receive beamforming is given as follows\cite{heath2018foundations}:
\begin{align}\label{UL_ZF}
{\mathbf{V}}={\mathbf{H}}_{\rm{ul}}({\mathbf{P}})({\mathbf{H}}_{\rm{ul}}^{\mathsf{H}}({\mathbf{P}}){\mathbf{H}}_{\rm{ul}}({\mathbf{P}}))^{-1}.
\end{align}
By substituting equation \eqref{UL_ZF} into \eqref{per_user_sinr}, we obtain the following simplified SINR expression:
\begin{align}\label{per_user_sinr_ZF}
\gamma_{{\rm{ul}},k}^{\rm{ZF}}({\mathbf{P}})=\frac{P_k}{[\mathbf{M}({\mathbf{P}})\mathbf{M}^{\mathsf{H}}({\mathbf{P}})]_{k,k}}.
\end{align}
where $\mathbf{M}({\mathbf{P}})=({\mathbf{H}}_{\rm{ul}}^{\mathsf{H}}({\mathbf{P}}){\mathbf{H}}_{\rm{ul}}({\mathbf{P}}))^{-1}{\mathbf{H}}_{\rm{ul}}^{\mathsf{H}}({\mathbf{P}}){\mathbf{R}}_{\mathbf{z}}^{\frac{1}{2}}({\mathbf{P}})$ and $[\cdot]_{k,k}$ denotes the $k$th row and the $k$th column of a matrix.
Therefore, the optimization problem in \eqref{UL_Sum_Rate_Problem} can be simplified as follows:
\begin{subequations}\label{UL_Sum_Rate_Problem_ZF_1}
\begin{align}
\max_{{\mathbf{P}}}&~{\mathcal{R}}_{\rm{ul}}^{\rm{ZF}}({\mathbf{P}})=\sum_{k=1}^{K}\log_2\left(1+\frac{P_k}{[\mathbf{M}({\mathbf{P}})\mathbf{M}^{\mathsf{H}}({\mathbf{P}})]_{k,k}}\right)\\
{\rm{s.t.}}&~\eqref{ul_c1},~\eqref{ul_c2},
\end{align}
\end{subequations}
Similarly, we adopt the element-wise method to sequentially optimize the position $p_{m,n}$, while keeping the remaining elements in $\mathbf{P}$ fixed.
Let $\breve{\mathbf{h}}(\mathbf{p}_m)\in\mathbb{C}^{K\times 1}$ denote the $m$th column of ${\mathbf{H}}_{\rm{ul}}^{\mathsf{H}}({\mathbf{P}})$ and the $k$th entry of $\breve{\mathbf{h}}(\mathbf{p}_m)$ is $[\breve{\mathbf{h}}(\mathbf{p}_m)]_k=\bar{a}_k^{\mathsf{H}}(p_{m,n})$ as defined in \eqref{apmn}. 
To avoid the matrix inverse at each sampling point, the Sherman-Morrison lemma is employed to further reformulate $\mathbf{M}({\mathbf{P}})$ as follows:
\begin{align}
\mathbf{M}({\mathbf{P}})=\left(\breve{\mathbf{H}}_{\rm{ul}}^{-1}-\frac{\breve{\mathbf{H}}_{\rm{ul}}^{-1}\breve{\mathbf{h}}(\mathbf{p}_m)\breve{\mathbf{h}}^{\mathsf{H}}(\mathbf{p}_m)\breve{\mathbf{H}}_{\rm{ul}}^{-1}}{1+\breve{\mathbf{h}}^{\mathsf{H}}(\mathbf{p}_m)\breve{\mathbf{H}}_{\rm{ul}}^{-1}\breve{\mathbf{h}}(\mathbf{p}_m)}\right)\mathbf{N}(\mathbf{p}_m),
\end{align}
where $\breve{\mathbf{H}}_{\rm{ul}}=\sum_{m'\neq m}^{M}\breve{\mathbf{h}}(\mathbf{p}_{m'})\breve{\mathbf{h}}^{\mathsf{H}}(\mathbf{p}_{m'})$ is invertible according to \textbf{Remark~\ref{remark1}} and $\mathbf{N}(\mathbf{p}_m)$ is given as follows:
\begin{align}
&\mathbf{N}(\mathbf{p}_m)=\sigma\Big[\lVert{\mathbf{g}}_{\rm{ul}}({\mathbf{p}}_{1})\rVert\breve{\mathbf{h}}(\mathbf{p}_1),\ldots,\lVert{\mathbf{g}}_{\rm{ul}}({\mathbf{p}}_{m})\rVert\breve{\mathbf{h}}(\mathbf{p}_m),\nonumber\\
&\quad\quad\quad\quad\quad\quad\quad\quad \ldots,\lVert{\mathbf{g}}_{\rm{ul}}({\mathbf{p}}_{M})\rVert\breve{\mathbf{h}}(\mathbf{p}_M)\Big]\in\mathbb{C}^{K\times M}.
\end{align}
Let $\mathbf{m}_k(\mathbf{p}_m)\in\mathbb{C}^{M\times 1} $ denote the $k$th column of the $\mathbf{M}^{\mathsf{H}}({\mathbf{P}})$, the per-user SINR can be further written as follows:
\begin{align}\label{per_user_sinr_ZF_1}
\gamma_{{\rm{ul}},k}^{\rm{ZF}}({\mathbf{p}_m})=\frac{P_k}{\|\mathbf{m}_k(\mathbf{p}_m)\|^2}.
\end{align}
Therefore, an approximate optimal position $p_{m,n}$ can be obtained by selecting  
\begin{equation}\label{position_MRC}
p_{m,n}= \arg\max_{p_{m,n} \in \mathcal{X}_m/\mathcal{X}_m(\hat{\mathcal{I}}_m)} \sum_{k=1}^{K}\log_2\left(1+\gamma_{{\rm{ul}},k}^{\rm{ZF}}({\mathbf{p}_m})\right).
\end{equation}
By sequentially updating all PAs' positions, \eqref{UL_Sum_Rate_Problem_ZF_1} can be solved.

\subsection{Minimum Mean-Squared Error}
The per-user SINR given in \eqref{per_user_sinr} features a generalized Rayleigh quotient, which is maximized when $\mathbf{V}\in{\mathbbmss{C}}^{M\times K}$ is set as follows \cite{heath2018foundations}:
\begin{align}\label{MMSE_Detector_First}
{\mathbf{V}} = ({\mathbf{H}}_{\rm{ul}}({\mathbf{P}}){\rm{diag}}(P_1;\ldots;P_K){\mathbf{H}}_{\rm{ul}}^{\mathsf{H}}({\mathbf{P}})+
{\mathbf{R}}_{\mathbf{z}}({\mathbf{P}}))^{-1}{\mathbf{H}}_{\rm{ul}}({\mathbf{P}}),
\end{align}
which is referred to as the optimal MMSE detector. Using this receive beamformer, the achievable rate for user $k$ can be written as \eqref{UL_per-user_Rate_MMSE}, as shown at the top of next page. 
\begin{figure*}[!t]
\begin{align}\label{UL_per-user_Rate_MMSE}
{\mathcal{R}}_{{\rm{ul}},k}^{\rm{MMSE}}({\mathbf{P}})&=\log_2\det({\mathbf{I}}_M+P_k({\mathbf{G}}_{\rm{ul}}^{\mathsf{T}}({\mathbf{P}}){\mathbf{h}}_k^{*}({\mathbf{P}}))
({\mathbf{G}}_{\rm{ul}}^{\mathsf{T}}({\mathbf{P}}){\mathbf{h}}_k^{*}({\mathbf{P}}))^{\mathsf{H}}(\sum_{k'\ne k}P_{k'}({\mathbf{G}}_{\rm{ul}}^{\mathsf{T}}({\mathbf{P}}){\mathbf{h}}_{k'}^{*}({\mathbf{P}}))
({\mathbf{G}}_{\rm{ul}}^{\mathsf{T}}({\mathbf{P}}){\mathbf{h}}_{k'}^{*}({\mathbf{P}}))^{\mathsf{H}}+
{\mathbf{R}}_{\mathbf{z}}({\mathbf{P}}))^{-1})\nonumber\\
&=\log_2\det(\sum_{k=1 }^{K}P_{k}({\mathbf{G}}_{\rm{ul}}^{\mathsf{T}}({\mathbf{P}}){\mathbf{h}}_{k}^{*}({\mathbf{P}}))
({\mathbf{G}}_{\rm{ul}}^{\mathsf{T}}({\mathbf{P}}){\mathbf{h}}_{k}^{*}({\mathbf{P}}))^{\mathsf{H}}+
{\mathbf{R}}_{\mathbf{z}}({\mathbf{P}}))\nonumber\\
&\quad\quad\quad\quad\quad\quad\quad\quad\quad\quad\quad\quad\quad\quad\quad\quad-\log_2\det(\sum_{k'\ne k}P_{k'}({\mathbf{G}}_{\rm{ul}}^{\mathsf{T}}({\mathbf{P}}){\mathbf{h}}_{k'}^{*}({\mathbf{P}}))
({\mathbf{G}}_{\rm{ul}}^{\mathsf{T}}({\mathbf{P}}){\mathbf{h}}_{k'}^{*}({\mathbf{P}}))^{\mathsf{H}}+
{\mathbf{R}}_{\mathbf{z}}({\mathbf{P}}))\nonumber\\
&=\log_2\det(\sum_{k=1 }^{K}P_{k}{\mathbf{R}}_{\mathbf{z}}^{-\frac{1}{2}}({\mathbf{P}})({\mathbf{G}}_{\rm{ul}}^{\mathsf{T}}({\mathbf{P}}){\mathbf{h}}_{k}^{*}({\mathbf{P}}))
({\mathbf{G}}_{\rm{ul}}^{\mathsf{T}}({\mathbf{P}}){\mathbf{h}}_{k}^{*}({\mathbf{P}}))^{\mathsf{H}}{\mathbf{R}}_{\mathbf{z}}^{-\frac{1}{2}}({\mathbf{P}})+{\mathbf{I}}_M)\nonumber\\
&\quad\quad\quad\quad\quad\quad\quad\quad\quad\quad-\log_2\det(\sum_{k'\ne k}P_{k'}{\mathbf{R}}_{\mathbf{z}}^{-\frac{1}{2}}({\mathbf{P}})({\mathbf{G}}_{\rm{ul}}^{\mathsf{T}}({\mathbf{P}}){\mathbf{h}}_{k'}^{*}({\mathbf{P}}))
({\mathbf{G}}_{\rm{ul}}^{\mathsf{T}}({\mathbf{P}}){\mathbf{h}}_{k'}^{*}({\mathbf{P}}))^{\mathsf{H}}{\mathbf{R}}_{\mathbf{z}}^{-\frac{1}{2}}({\mathbf{P}})+
{\mathbf{I}}_M).
\end{align}
\hrulefill
\end{figure*}
By applying Sylvester’s determinant identity, \eqref{UL_per-user_Rate_MMSE} can be further reformulated into \eqref{UL_per-user_Rate_MMSE_v1}, as shown at the top of the next page,
\begin{figure*}[!t]
\vspace{-10pt}
\begin{align}\label{UL_per-user_Rate_MMSE_v1}
{\mathcal{R}}_{{\rm{ul}},k}^{\rm{MMSE}}({\mathbf{P}})&=\log_2\det({\mathbf{H}}_{\rm{ul}}^{\mathsf{H}}({\mathbf{P}}){\mathbf{R}}_{\mathbf{z}}^{-1}({\mathbf{P}}){\mathbf{H}}_{\rm{ul}}({\mathbf{P}}){\rm{diag}}(P_1;\ldots;P_K)+{\mathbf{I}}_K)\nonumber\\
&-\log_2\det({\mathbf{H}}_{\mathrm{ul}{\backslash k}}^{\mathsf{H}}({\mathbf{P}}){\mathbf{R}}_{\mathbf{z}}^{-1}({\mathbf{P}}){\mathbf{H}}_{\mathrm{ul}{\backslash k}}({\mathbf{P}}){\rm{diag}}(P_1;\ldots;P_{k-1};P_{k+1};\ldots;P_K)+{\mathbf{I}}_{K-1}),
\end{align}
\hrulefill
\end{figure*}
where ${\mathbf{H}}_{\mathrm{ul}{\backslash k}}({\mathbf{P}})\in \mathbb{C}^{M\times (K-1)}$ is a reduced form of ${\mathbf{H}}_{\mathrm{ul}}({\mathbf{P}})$ with its $k$th column missing.

In this case, the optimization problem in \eqref{UL_Sum_Rate_Problem} can be simplified as follows:
\begin{subequations}\label{UL_Sum_Rate_Problem_MMSE}
\begin{align}
\max_{{\mathbf{P}}}&~{\mathcal{R}}_{\rm{ul}}^{\rm{MMSE}}({\mathbf{P}})=\sum_{k=1}^{K}{\mathcal{R}}_{{\rm{ul}},k}^{\rm{MMSE}}({\mathbf{P}})\\
{\rm{s.t.}}&~\eqref{ul_c1},~\eqref{ul_c2}.
\end{align}
\end{subequations}
Similar to the above, we can tackle the problem in \eqref{UL_Sum_Rate_Problem_MMSE} via the element-wise approach. To avoid determinant operations for each sample point, when updating $p_{m,n}$ while keeping the remaining PAs' positions fixed, we can add a constant term independent of $p_{m,n}$ to simplify the expression without affecting the optimization result. Consequently, by subtracting the constant term $\log_2\det(\mathbf{B})$, the first term in \eqref{UL_per-user_Rate_MMSE_v1} can be transformed as \eqref{term1}, where $\mathbf{B}$ is defined as $\mathbf{B}=\sum_{m'\neq m}\frac{1}{\sigma^2\lVert{\mathbf{g}}_{\rm{ul}}({\mathbf{p}}_{m'})\rVert^2}\breve{\mathbf{h}}(\mathbf{p}_{m'})\breve{\mathbf{h}}^{\mathsf{H}}(\mathbf{p}_{m'}){\rm{diag}}(P_1;\ldots;P_K)+{\mathbf{I}}_K$. 
Here, $\breve{\mathbf{h}}(\mathbf{p}_m)\in\mathbb{C}^{K\times 1}$ denotes the $m$th column of ${\mathbf{H}}_{\rm{ul}}^{\mathsf{H}}({\mathbf{P}})$ and the $k$th entry of $\breve{\mathbf{h}}(\mathbf{p}_m)$ is given by $[\breve{\mathbf{h}}(\mathbf{p}_m)]_k=\bar{a}_k^{\mathsf{H}}(p_{m,n})$ as defined in \eqref{apmn}. Similarly, let $\breve{\mathbf{h}}_{\backslash k}(\mathbf{p}_m)\in\mathbb{C}^{(K-1)\times 1}$ represent the $m$th column of ${\mathbf{H}}_{\mathrm{ul}{\backslash k}}^{\mathsf{H}}({\mathbf{P}})$, which is essentially $\breve{\mathbf{h}}(\mathbf{p}_m)$ with its $k$th entry removed.

\begin{figure*}[!t]
\vspace{-10pt}
\begin{align}\label{term1}
\mathrm{The~First~Term}:~&\log_2\det({\mathbf{H}}_{\rm{ul}}^{\mathsf{H}}({\mathbf{P}}){\mathbf{R}}_{\mathbf{z}}^{-1}({\mathbf{P}}){\mathbf{H}}_{\rm{ul}}({\mathbf{P}}){\rm{diag}}(P_1;\ldots;P_K)+{\mathbf{I}}_K)-\log_2\det(\mathbf{B})\nonumber\\
&\overset{(a)}{=}\log_2\det(\frac{1}{\sigma^2\lVert{\mathbf{g}}_{\rm{ul}}({\mathbf{p}}_{m})\rVert^2}\breve{\mathbf{h}}(\mathbf{p}_{m})\breve{\mathbf{h}}^{\mathsf{H}}(\mathbf{p}_{m}){\rm{diag}}(P_1;\ldots;P_K)\mathbf{B}^{-1}+{\mathbf{I}}_K)\nonumber\\
&\overset{(b)}{=}\log_2(1+\frac{1}{\sigma^2\lVert{\mathbf{g}}_{\rm{ul}}({\mathbf{p}}_{m})\rVert^2}\breve{\mathbf{h}}^{\mathsf{H}}(\mathbf{p}_{m}){\rm{diag}}(P_1;\ldots;P_K)\mathbf{B}^{-1}\breve{\mathbf{h}}(\mathbf{p}_{m})).
\end{align}
\hrulefill
\end{figure*}
With these definitions at hand, equality $(a)$ is obtained through some basic determinant operations, while equality $(b)$ follows from a further application of Sylvester’s determinant identity. Following similar operations, the second term in \eqref{UL_per-user_Rate_MMSE_v1} can be modified as \eqref{term2}, as shown at the top of next page,
\begin{figure*}[!t]
\begin{align}\label{term2}
\mathrm{The~Second~Term}:~&\log_2(1+\frac{1}{\sigma^2\lVert{\mathbf{g}}_{\rm{ul}}({\mathbf{p}}_{m})\rVert^2}\breve{\mathbf{h}}_{\backslash k}^{\mathsf{H}}(\mathbf{p}_{m}){\rm{diag}}(P_1;\ldots;P_{k-1};P_{k+1};\ldots;P_K)\bar{\mathbf{B}}^{-1}\breve{\mathbf{h}}_{\backslash k}(\mathbf{p}_{m})),
\end{align}
\hrulefill
\end{figure*}
where $\bar{\mathbf{B}}$ is defined as follows:
\begin{align}\label{BB}
\bar{\mathbf{B}}&=\sum_{m'\neq m}\frac{1}{\sigma^2\lVert{\mathbf{g}}_{\rm{ul}}({\mathbf{p}}_{m'})\rVert^2}\breve{\mathbf{h}}_{\backslash k}(\mathbf{p}_{m'})\breve{\mathbf{h}}_{\backslash k}^{\mathsf{H}}(\mathbf{p}_{m'})\nonumber\\
&\quad\quad\quad{\rm{diag}}(P_1;\ldots;P_{k-1};P_{k+1};\ldots;P_K)+{\mathbf{I}}_{K-1}.
\end{align}
Therefore, an approximate optimal position $p_{m,n}$ can be obtained by selecting  
\begin{equation}\label{position_MRC}
p_{m,n}= \arg\max_{p_{m,n} \in \mathcal{X}_m/\mathcal{X}_m(\hat{\mathcal{I}}_m)} \sum_{k=1}^{K}f_{\rm{ul}}^{\rm{MMSE}}(\mathbf{p}_{m}).
\end{equation}
where $f_{\rm{ul}}^{\rm{MMSE}}(\mathbf{p}_{m})$ is defined as \eqref{f_mmse}.
\begin{figure*}[!t]
\begin{align}\label{f_mmse}
f_{\rm{ul}}^{\rm{MMSE}}(\mathbf{p}_{m})&=\log_2(1+\frac{1}{\sigma^2\lVert{\mathbf{g}}_{\rm{ul}}({\mathbf{p}}_{m})\rVert^2}\breve{\mathbf{h}}^{\mathsf{H}}(\mathbf{p}_{m}){\rm{diag}}(P_1;\ldots;P_K)\mathbf{B}^{-1}\breve{\mathbf{h}}(\mathbf{p}_{m}))\nonumber\\
&-\log_2(1+\frac{1}{\sigma^2\lVert{\mathbf{g}}_{\rm{ul}}({\mathbf{p}}_{m})\rVert^2}\breve{\mathbf{h}}_{\backslash k}^{\mathsf{H}}(\mathbf{p}_{m}){\rm{diag}}(P_1;\ldots;P_{k-1};P_{k+1};\ldots;P_K)\bar{\mathbf{B}}^{-1}\breve{\mathbf{h}}_{\backslash k}(\mathbf{p}_{m})).
\end{align}
\hrulefill
\end{figure*}
By sequentially updating all PAs' positions, \eqref{UL_Sum_Rate_Problem_MMSE} can be solved.

%\begin{table}[htbp]
%\caption{Complexity Comparison of Stepwise Antenna Activation Methods Based on Linear Beamforming}
%\centering
%\renewcommand{\arraystretch}{1.4}
%\begin{tabular}{|c|>{\centering\arraybackslash}m{5cm}|c|>{\centering\arraybackslash}m{5cm}|c|}
%\hline
%\multirow{2}{*}{\text{Type}} & \multicolumn{2}{c|}{\text{Downlink}} & \multicolumn{2}{c|}{\text{Uplink}} \\
%\cline{2-5}
%& PA activation & Complexity & PA activation & Complexity \\
%\hline
%MRT(MRC) & $\mathbf{w}_{\text{MRC}}(\mathbf{r}) = \mathbf{h}(\mathbf{r})\mathbf{I}_K$ & $\mathcal{O}(1)$ & $\mathbf{W}_{\text{MRC}} = \hat{\mathbf{H}}$ & $\mathcal{O}(1)$ \\
%\hline
%ZF & $\mathbf{w}_{\text{ZF}}(\mathbf{r}) = \mathbf{h}(\mathbf{r})\mathbf{R}^{-1}$ & $\mathcal{O}(K^3)$ & $\mathbf{W}_{\text{ZF}} = \hat{\mathbf{H}} \hat{\mathbf{R}}^{-1}$ & $\mathcal{O}(K^3)$ \\
%\hline
%MMSE & $\mathbf{w}_{\text{MMSE}}(\mathbf{r}) = \mathbf{h}(\mathbf{r})(\mathbf{I}_K + \mathbf{P}\mathbf{R})^{-1}$ & $\mathcal{O}(K^3)$ & $\mathbf{W}_{\text{MMSE}} = \hat{\mathbf{H}}(\mathbf{I}_K + \hat{\mathbf{P}}\hat{\mathbf{R}})^{-1}$ & $\mathcal{O}(K^3)$ \\
%\hline
%\end{tabular}
%\end{table}

\section{Numerical Investigations}
In this section, we conduct extensive simulations to evaluate the performance gain of the PASS over conventional fixed-antenna systems. As discussed in the previous sections, the number of sampling points is identified as a critical factor affecting the performance of PASS. Therefore, its impact on system performance is first examined. Subsequently, for both uplink and downlink, we analyze the effects of transmit power, the side length of the square region along the $x$-axis, and the number of PAs and users on system performance. Based on the results, several insightful observations are obtained.
\begin{figure*}[!t]
  \centering
  \setlength{\abovecaptionskip}{0pt}
  \subfigure[Achievable downlink sum-rate vs. search resolution $N_s$.]{
    \includegraphics[height=0.345\textwidth]{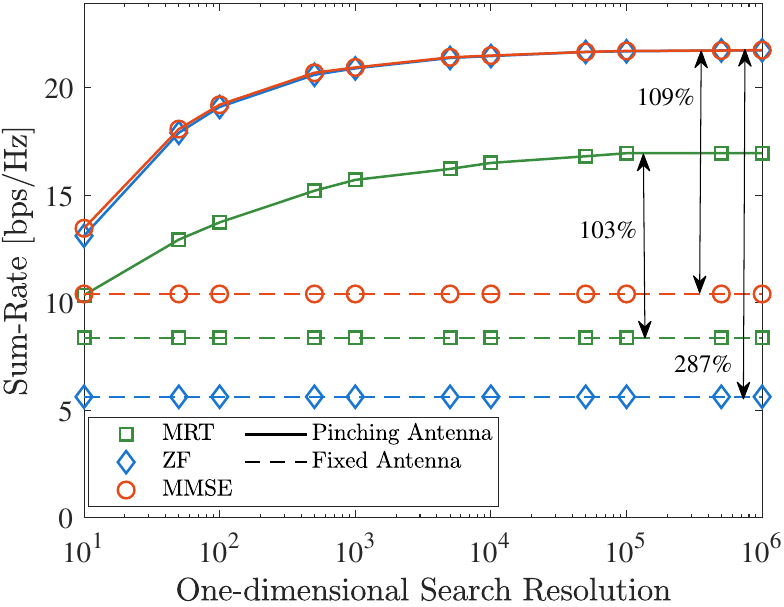}
    \label{DL_Ns}
  }
  \hspace{1cm}
  \subfigure[Achievable uplink sum-rate vs. search resolution $N_s$.]{
    \includegraphics[height=0.345\textwidth]{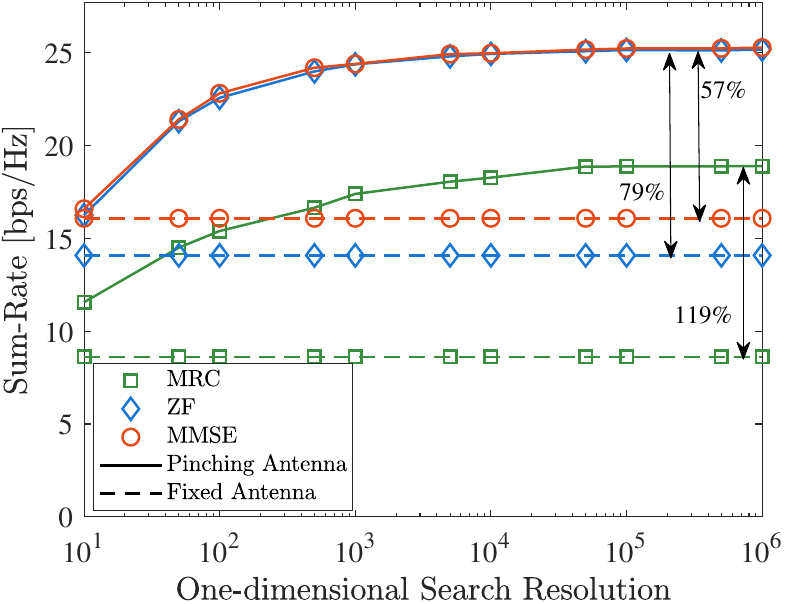}
    \label{UL_Ns}
  }
  \caption{Achievable sum-rate performance under different one-dimensional search resolutions.}
  \label{fig:DL_UL_Ns}
\end{figure*}
\begin{figure*}[!t]
  \centering
  \setlength{\abovecaptionskip}{0pt}
  \subfigure[Achievable downlink sum-rate vs. transmit power $P_d$.]{
    \includegraphics[height=0.35\textwidth]{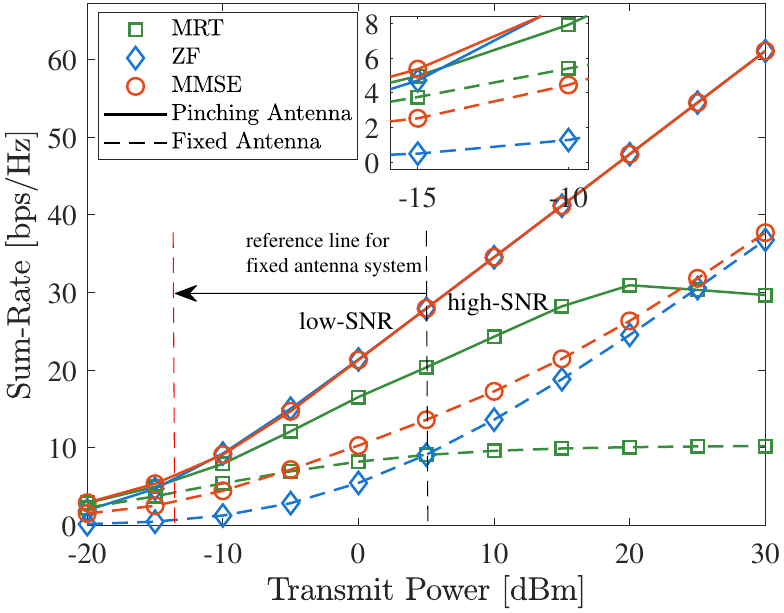}
    \label{DL_power}
  }
  \hspace{1cm}
  \subfigure[Achievable uplink sum-rate vs. per-user transmit power $P_u$.]{
    \includegraphics[height=0.35\textwidth]{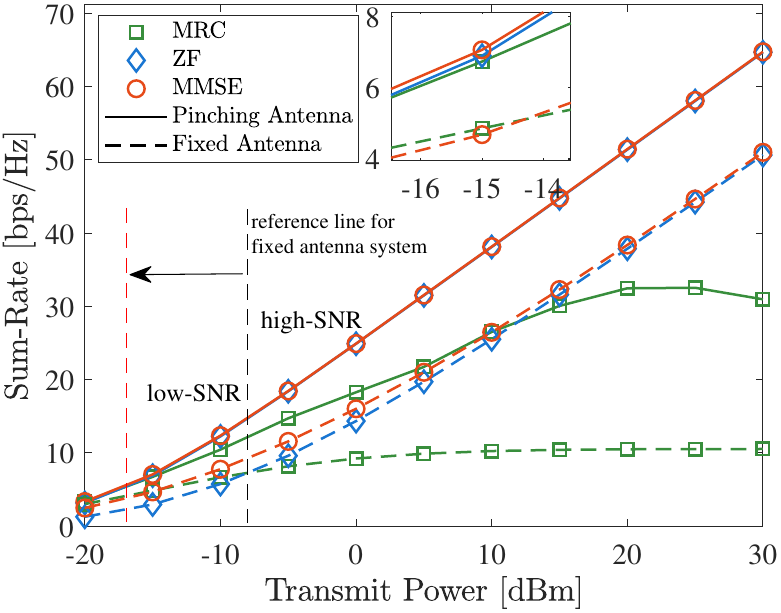}
    \label{UL_power}
  }
  \caption{Achievable sum-rate performance under different power levels.}
  \label{fig:DL_UL_power}
  \vspace{-10pt}
\end{figure*}
\subsection{Experimental Setting}
In the simulations, we consider a rectangular region located in the $x$-$y$ plane with its center at $[0, 0, 0]$, as illustrated in Fig. \ref{Figure: System_Model}. The length of each waveguide is equal to the side length of the service region along the $x$-axis, i.e., $L_m = D_x, \forall m \in [M]$, and the spacing between adjacent waveguides along the $y$-axis is given by $d = \frac{D_y}{M-1}$. Unless otherwise specified, the following parameters are adopted: the side lengths of the rectangular region are $ D_x = 50~\mathrm{m}$ and $D_y = 6~\mathrm{m}$. The BS is equipped with $M = 5$ waveguides deployed at a height $a = 5~\mathrm{m}$, each activating $N = 6$ PAs, and serves $K = 4$ users. The carrier frequency and noise power are set to $f = 28~\mathrm{GHz}$, $\sigma^2= -90~\mathrm{dBm}$. The waveguide wavelength is $\lambda_g = \frac{\lambda}{n_{\rm eff}}$, where $n_{\rm eff} = 1.4$ represents the effective refractive index of the dielectric waveguide. The average attenuation factor is $\kappa=0.1$. To prevent mutual coupling, the minimum spacing between PAs on the same waveguide is set to $\Delta = \frac{\lambda}{2}$. The BS transmit power and the per-user transmit power are $P_d = 0$ dBm and $P_u = 0$ dBm, respectively.For the one-dimensional search, the number of sampling points is set
to $N_s = 10^4$ and the positions of the PAs are initialized randomly. Numerical results are obtained by averaging 400 random seeds.

To ensure a fair comparison with the conventional fixed-antenna systems, a \textit{hybrid beamforming-based MIMO (hMIMO)} architecture is adopted as the benchmark. Specifically, the \textit{hMIMO} system employs a hybrid transceiver equipped with $M$ RF chains, where each RF chain is connected to $N$ antennas through a tunable phase shifters. Both downlink and uplink signal processing are performed using hybrid analog-digital schemes. The analog beamforming is implemented via phase shifter networks which follows the hybrid beamforming algorithms proposed in \cite{Zhang2024hybrid}, while the digital beamforming is designed in the baseband using the aforementioned methods. 
\begin{figure*}[!t]
  \centering
  \setlength{\abovecaptionskip}{0pt}
  \subfigure[Achievable downlink sum-rate vs. side length $D_x$.]{
    \includegraphics[height=0.35\textwidth]{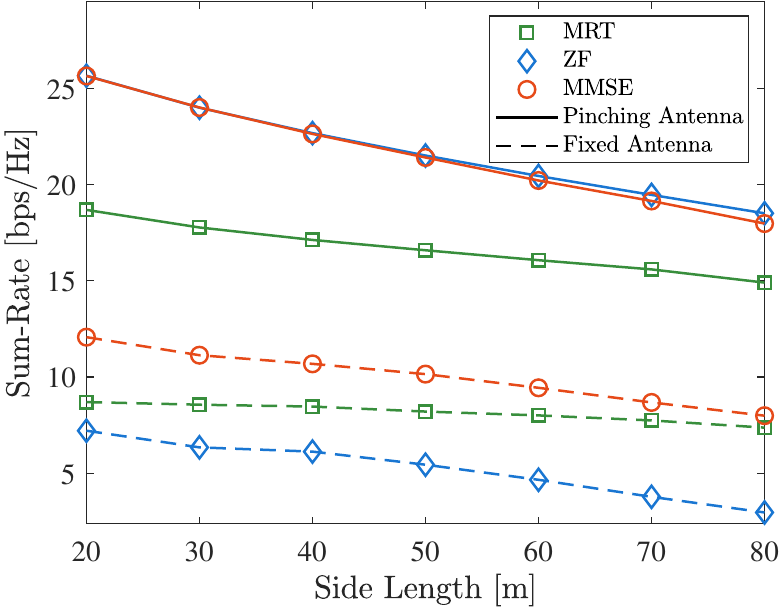}
    \label{DL_length}
  }
  \hspace{1cm}
  \subfigure[Achievable uplink sum-rate vs. side length $D_x$.]{
    \includegraphics[height=0.35\textwidth]{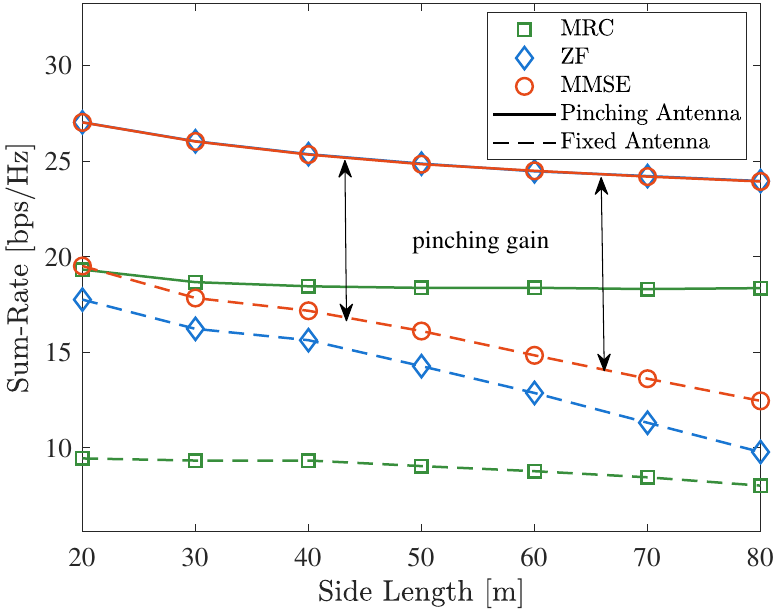}
    \label{UL_length}
  }
  \caption{Achievable sum-rate performance under different side length levels.}
  \label{fig:DL_UL_length}
\end{figure*}
\subsection{Sum-Rate Versus Search Resolution}
Fig. \ref{fig:DL_UL_Ns} presents the achievable sum-rate of the PASS as a function of the one-dimensional search resolution, for both downlink and uplink scenarios under various beamforming schemes. The one-dimensional search resolution refers to the number of candidate activation points along the waveguide, which directly affects the granularity of PAs' positions optimization in the element-wise process.
As observed in both figures, the sum-rate increases monotonically with search resolution across all considered beamforming schemes. This improvement stems from the fact that higher resolution enables more precise selection of PAs' positions, which allows better phase alignment with the user channel's spatial response and enhances the received signal power. Notably, when the search resolution reaches approximately $10^5$, the performance begins to saturate, indicating that near-optimal throughput can be achieved without requiring excessive computational complexity.
Furthermore, PASS exhibits significant performance advantages over conventional fixed-antenna systems in both downlink and uplink, even under reduced search resolutions. It thus serves as a promising solution for complexity-constrained system design.

\subsection{Sum-Rate Versus Transmit Power}
Fig. \ref{DL_power} illustrates the achievable downlink sum-rate as a function of the transmit power. As observed, under all beamforming schemes, PASS consistently outperforms the conventional fixed-antenna systems. This is because the flexible adjustment of PAs' positions in PASS facilitates signal phase control and reduces large-scale path loss, both of which contribute to improved system sum-rate performance.
In the fixed-antenna system, the three beamforming schemes exhibit well-known trends: MRT outperforms ZF in the low-SNR regime, while ZF surpasses MRT at high SNR, and MMSE performs well across the entire considered power range. This is attributed to the fact that MRT focuses solely on noise suppression, ZF targets only interference cancellation, whereas MMSE integrates both effects for joint suppression of noise and interference.
However, this behavior becomes less evident in the PASS. The low-SNR region where MRT outperforms ZF is significantly narrowed, and ZF only slightly underperforms MRT when $P_d < -15$ dBm. In the remaining power range, ZF nearly always achieves the same sum-rate as MMSE. This is because the adjustment of PAs' positions in PASS enhances the channel gain and enables the system to operate in a regime that matches the characteristics of the chosen beamforming strategy, such as an interference-limited regime for ZF. Moreover, as transmit power increases, the performance gap between MRT and the other two schemes widens, and MRT exhibits a decreasing trend in sum-rate due to its inability to effectively mitigate strong inter-user interference.

Fig. \ref{UL_power} presents the achievable uplink sum-rate versus per-user transmit power under different beamforming schemes. A similar phenomenon is observed as in the downlink case: thanks to the channel reconfigurability of PASS, it consistently outperforms the fixed-antenna system. For instance, at $P_u = 10$ dBm, the sum-rate improvement reaches 10\%. In addition, the low-SNR region where ZF underperforms MRC is also significantly reduced, and the gap between them becomes negligible (e.g., at $P_u = -15$ dBm, the difference is marginal). Likewise, ZF achieves nearly the same performance as MMSE across almost the entire considered power range. MRC suffers from severe inter-user interference at high transmit power, leading to a rise-then-fall trend in sum-rate.

\subsection{Sum-Rate Versus Side Length}
Fig. \ref{fig:DL_UL_length} presents the achievable downlink and uplink sum-rate performances as functions of the region side length $D_x$. In both scenarios, the sum-rate decreases as $D_x$ increases, which is primarily due to the increased average distance between users and the antenna array, resulting in higher large-scale path loss. This performance degradation trend is observed across all beamforming schemes.
Nonetheless, PASS consistently achieves significantly better performance than the conventional fixed-antenna system under all considered configurations. This advantage becomes more pronounced with increasing $D_x$, especially in Fig. \ref{UL_length}, as PASS can dynamically adjust PAs closer to users, thereby effectively compensating for large-scale path loss in larger regions. In contrast, the fixed-antenna system suffers from static deployment and lacks this spatial flexibility. 
\begin{figure*}[!t]
  \centering
  \setlength{\abovecaptionskip}{0pt}
  \subfigure[Achievable downlink sum-rate vs. number of PAs $N$.]{
    \includegraphics[height=0.35\textwidth]{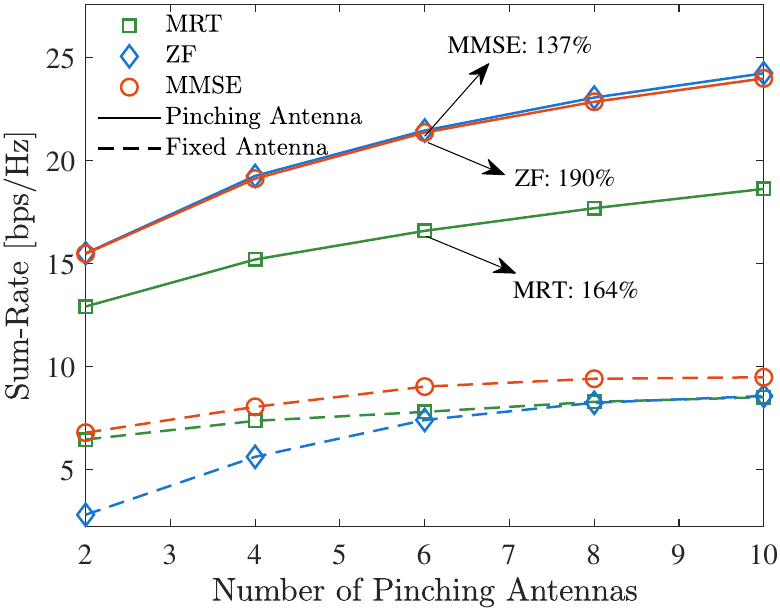}
    \label{DL_PAs}
  }
  \hspace{1cm}
  \subfigure[Achievable uplink sum-rate vs. number of PAs $N$.]{
    \includegraphics[height=0.35\textwidth]{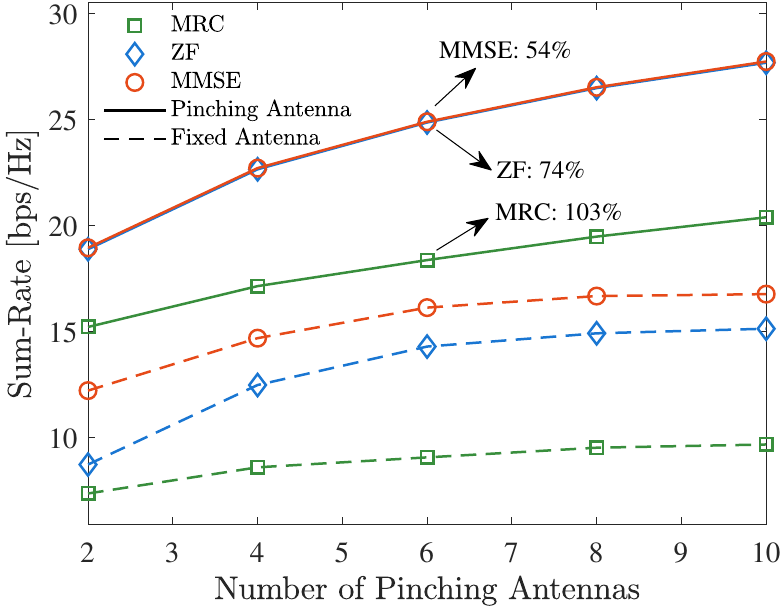}
    \label{UL_PAs}
  }
  \caption{Achievable sum-rate performance under different numbers of PAs.}
  \label{fig:DL_UL_PAs}
\end{figure*}
\begin{figure*}[!t]
  \centering
  \setlength{\abovecaptionskip}{0pt}
  \subfigure[Achievable downlink sum-rate vs. number of users $K$.]{
    \includegraphics[height=0.35\textwidth]{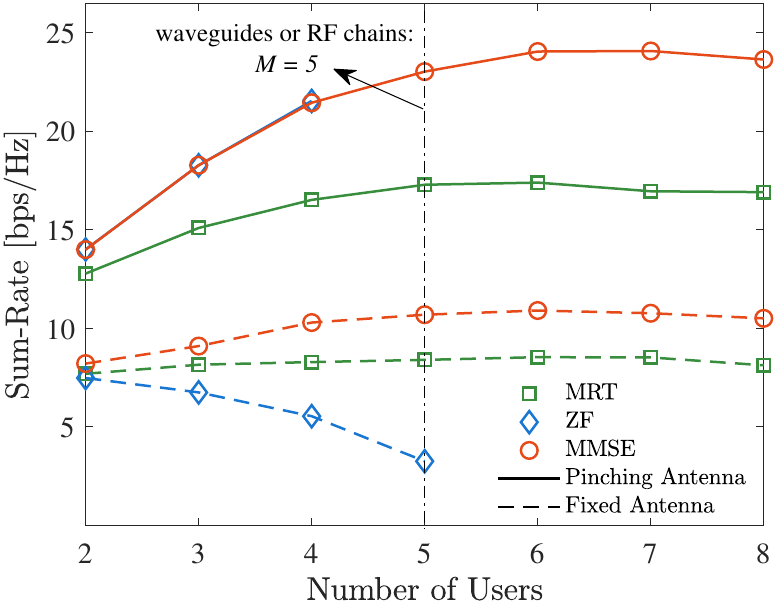}
    \label{DL_users}
  }
  \hspace{1cm}
  \subfigure[Achievable uplink sum-rate vs. number of users $K$.]{
    \includegraphics[height=0.35\textwidth]{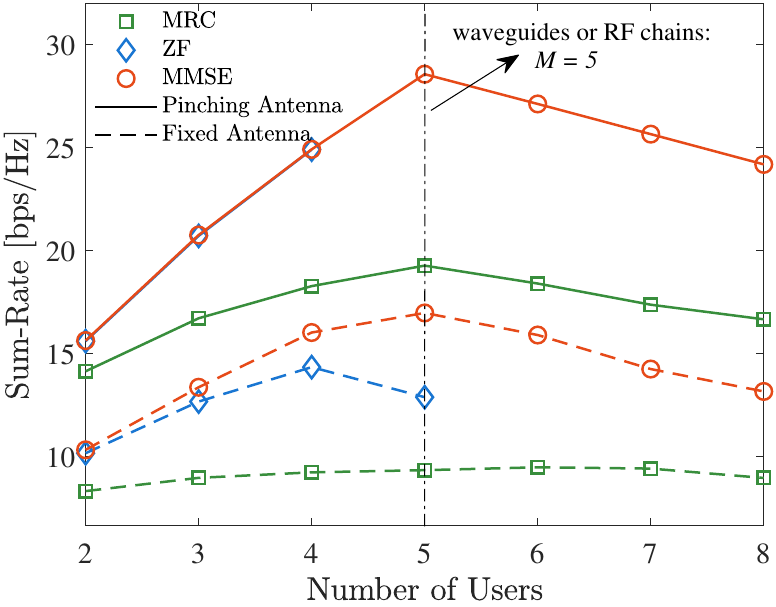}
    \label{UL_users}
  }
  \caption{Achievable sum-rate performance under different numbers of users.}
  \label{fig:DL_UL_users}
\end{figure*}

Moreover, for both downlink and uplink cases, a consistent observation is that in the fixed-antenna system, the performance advantages of MMSE and ZF over MRT (or MRC) diminish rapidly as $D_x$ increases. However, in PASS, such degradation is much slower, and ZF can maintain parity with MMSE across all region sizes. This is because the activation positions of the pinched PAs in PASS can be flexibly adjusted, which allows the system to control the received signal strength and adapt its operating regimes to suit different beamforming strategies. For example, the system can favor an interference-limited regime for ZF or a noise-limited regime for MRT or MRC. As a result, neither ZF nor MMSE suffers significant performance degradation as the average distance increases.

\subsection{Sum-Rate Versus Number of PAs}
Fig. \ref{fig:DL_UL_PAs} shows the achievable sum-rate performance for both downlink and uplink scenarios as a function of the number of PAs $N$. In both cases, the sum-rate improves monotonically with increasing $N$ across all beamforming schemes. This performance gain is primarily attributed to the increased spatial DoFs and array gain provided by the additional antennas.
Across all values of $N$, the proposed PASS architecture consistently outperforms the conventional fixed-antenna system. For instance, when $N=6$, the downlink PASS achieves sum-rate gains of approximately 137\%, 190\%, and 124\% for MMSE, ZF, and MRT, respectively, compared to the fixed-antenna system. In the uplink case, the improvements are also evident, with corresponding gains of 54\%, 74\%, and 103\% for MMSE, ZF, and MRC. 
This is because the reconfigurability of PASS enables flexible adjustment of PAs' positions, which not only facilitates signal phase alignment but also mitigates large-scale path loss. In contrast, fixed-antenna systems rely solely on phase shifters, which can only counteract small-scale fading effects, limiting their overall effectiveness.

\subsection{Sum-Rate Versus Number of users}
Fig. \ref{fig:DL_UL_users} illustrates the downlink and uplink sum-rates achieved by PASS and fixed-antenna systems as a function of the number of users\footnote{It is important to note that when the number of users exceeds the number of waveguides or RF chains, the ZF beamformer becomes inapplicable. In particular, the low-complexity ZF-based antenna activation scheme for PASS requires the number of waveguides $M$ to be greater than the number of users $K$.}. 
As shown in the Fig. \ref{DL_users}, in the downlink scenario, both MRT and MMSE exhibit a ``rise-then-fall'' behavior in PASS and fixed-antenna systems. This trend results from two factors: first, increasing the number of users introduces more severe interference; second, when the number of users $K$ exceeds the number of waveguides or RF chains $M$, the system cannot provide sufficient spatial DoFs to support all data streams. In contrast, ZF behaves differently in the two systems. The fixed-antenna system in this case operates in a noise-limited regime, where interference suppression has limited impact on system sum-rate, and thus the performance naturally degrades as interference increases. In the PASS, however, the flexible adjustment of PAs' positions enables the system to operate in an interference-limited regime, allowing ZF to achieve performance comparable to MMSE.
In the uplink scenario, as shown in Fig. \ref{UL_users}, all three beamforming schemes exhibit similar trends. MMSE and MRC experience noticeable performance degradation when $K>M$, due to insufficient spatial DoFs and increased interference. ZF in PASS maintains performance comparable to MMSE, while in the fixed-antenna system, its sum-rate first increases and then decreases. This is because, as the number of users increases, although ZF can still cancel inter-user interference, it becomes increasingly difficult to align with the users' spatial response, leading to degraded sum-rate.
Overall, although PASS may exhibit similar trends to fixed-antenna systems in certain conditions, the sum-rate improvements it provides remain substantial.

\section{Conclusion}
This paper proposed a low-complexity element-wise optimization scheme for jointly optimizing baseband and pinching beamforming in PASS. By leveraging classical linear beamforming techniques, the original coupled problem was reduced to a single-variable formulation with respect to PAs' positions. For each beamformer, compact closed-form expressions were derived to enable efficient one-dimensional search, avoiding repeated matrix inversions and other costly operations.
Simulation results demonstrated that PASS significantly outperform conventional fixed-antenna systems across various settings. The flexible adjustment of PAs' positions enhanced the channel gain and allowed the system to adapt its operating regime to better suit different beamforming strategies. Notably, the performance gap between ZF and MMSE observed in fixed-antenna systems became negligible in PASS.
This study laid a foundation for the practical deployment of low-complexity joint beamforming solutions for PASS.

\bibliographystyle{IEEEtran}
\bibliography{mybib}
\end{document}